\newif\ifdraft
\newif\iffull
\newif\ifcamera
\newif\ifcomments
\newcolumntype{C}[1]{>{\hsize=#1\hsize\centering\arraybackslash}X}%
\footnotesize\textsf{(Compiled at \currenttime, \today)}}}
\renewcommand{\paragraph}[1]{\smallskip\noindent\textbf{#1.}\enskip}
\definecolor{FGreen}{cmyk}{0.9,0.2,0.5,0.3}
\newcommand{\eg}{\textit{e.g.},\@\xspace}
\newcommand{\ie}{\textit{i.e.},\@\xspace}
\newcommand*{\etc}{%
    \@ifnextchar{.}%
        {etc}%
        {etc.\@\xspace}%
}
\newcommand*{\etal}{%
    \@ifnextchar{.}%
        {\textit{et al}}%
        {\textit{et al.}\@\xspace}%
}
\def\txtapprox{$\mathtt{\sim}$} %
\newtheorem{claim}{Claim}
\newtheorem*{rep@theorem}{\rep@title}
\newcommand{\newreptheorem}[2]{%
\newenvironment{rep#1}[1]{%
 \def\rep@title{#2 \ref{##1}}%
 \begin{rep@theorem}}%
 {\end{rep@theorem}}}
\def\BibTeX{{\rm B\kern-.05em{\sc i\kern-.025em b}\kern-.08em
    T\kern-.1667em\lower.7ex\hbox{E}\kern-.125emX}}
\begin{document}
\title{Practical Volume-Based Attacks on Encrypted Databases}

\author{\IEEEauthorblockN{Rishabh Poddar$^*$
\thanks{$^*$Both authors contributed equally (ordered alphabetically).}}
\IEEEauthorblockA{\textit{UC Berkeley} \\
rishabhp@eecs.berkeley.edu}
\and
\IEEEauthorblockN{Stephanie Wang$^*$}
\IEEEauthorblockA{\textit{UC Berkeley} \\
swang@cs.berkeley.edu}
\and
\IEEEauthorblockN{Jianan Lu}
\IEEEauthorblockA{\textit{Princeton University} \\
jiananl@princeton.edu}
\and
\IEEEauthorblockN{Raluca Ada Popa}
\IEEEauthorblockA{\textit{UC Berkeley} \\
raluca@eecs.berkeley.edu}
}

\maketitle

\begin{abstract}
Recent years have seen an increased interest towards strong security primitives for encrypted databases (such as oblivious protocols),
that hide the access patterns of query execution, and reveal only the volume of results.
However, recent work has  shown that even volume leakage can enable the reconstruction of entire columns in the database. %
Yet, existing attacks rely on a set of assumptions that are unrealistic in practice: for example, they
\begin{enumerate*}[(i)]
    \item require a large number of queries to be issued by the user, or 
    \item assume certain distributions on the queries or underlying data (\eg that the queries are distributed uniformly at random, or that the database does not contain missing values).
\end{enumerate*}

In this work, we present new attacks for recovering the content of individual user queries, 
assuming no leakage from the system except the number of results and avoiding the limiting assumptions above. 
Unlike prior attacks, our attacks require only a {\em single} query to be issued by the user for recovering the keyword. Furthermore, our attacks make no assumptions about the distribution of issued queries or the underlying data. Instead, our key  insight is to exploit the \emph{behavior} of real-world applications.

We start by surveying 11 applications to identify two key characteristics that can be exploited by attackers---(i)~file injection, and (ii)~automatic query replay. We present attacks that leverage these two properties in concert with volume leakage, independent of the details of any encrypted database system. Subsequently, we perform an attack on the real Gmail web client by simulating a server-side adversary. Our attack on Gmail completes within a matter of minutes, demonstrating the feasibility of our techniques. We also present three ancillary attacks for situations when certain mitigation strategies are employed.
\end{abstract}

\section{Introduction}
\label{sec:intro}

In recent years, there has been a tremendous increase in interest towards encrypted database systems that enable queries over encrypted data,
because they provide privacy guarantees against a compromised database server.
A number of practical systems have been proposed by academia as well as industry \cite{Fuller:EDBs, CashJJJKRS14, FJKNRS15:RichQueries, popa:cryptdb, AEKKRV13:cipherbase2, TKMZ13:monomi, website:MSalwaysEncryptedDB, website:skyhigh, website:ciphercloud, website:iqrypt, poddar:arx}, typically relying on techniques such as property-preserving encryption \cite{boldyreva:ope, boldyreva:ope-revisited, popa:mope, KS14:optimalAvgOPE, WuLewiRange} or searchable encryption \cite{SongWP00, CurtmolaGKO06, KamaraPR12, CashJJKRS13, OgataKKM13, CashJJJKRS14, LauCSJLB14, Kurosawa14, NaveedPG14,StefanovPS14, HeAJSS14, Bost16}.

Most of these schemes leak {\em query access patterns}. 
Consider the example of an email application: a user issues a search query for a keyword over their emails. To facilitate such queries, the mail server typically stores an inverted index (also called a {\em secondary index}) for each user's mailbox, which maps each keyword to the list of emails it appears in.
When fetching the results of a queried keyword, a compromised server can observe the set of email identifiers that match the keyword (\ie the access patterns of the query), even though the email bodies remain encrypted.
 A set of recent works~\cite{IslamKK12, CashGPR15, ZhangKP16, LiuZWT14, AbdelraheemAG17, GrubbsMNRS16, PouliotW16, GiaruadABL17,Islam:2014:rangepatterns,KellarisKNO16,Dautrich:2013,KennyAccessPattern, GLMP:attack:2019, KPT:knnattack:2019, KPT:attack:2020} has shown that such access patterns leak significant information to the attacker, enabling the identification of keywords that users search for as well as email contents.

Many of these works discuss oblivious protocols 
such as ORAM (Oblivious RAM)~\cite{GoldreichO96,StefanovDSFRYD13} or PIR (Private Information Retrieval)~\cite{PIR:survey} as a solution to this leakage. 
Even an attacker eavesdropping at the server is unable to identify which documents were returned in response to a query (\ie the access patterns of queries remain hidden).
Instead, the attacker can only observe the {\em volume} of results.
Consequently, these schemes are often regarded as conferring a very strong security guarantee, the main downside largely being their slow performance.

However, in seminal work, Kellaris \etal~\cite{KellarisKNO16} showed that even schemes that provably conceal access patterns allow attackers to reconstruct the {\em database counts}, \ie the number of documents in the database containing each particular value. 
The attacker neither knows the content of individual queries (which are encrypted), nor which documents were returned in response. %
Instead, he only observes the  volume of results given a set of range queries.
 Kellaris \etal~showed that volume-based attacks were possible, even if not yet practical. Their techniques required the attacker to observe the result volumes of $O(N^4\log N$) range queries, for a domain of size $N$. Furthermore, they also assumed that the range queries were drawn at random from a uniform distribution, thus severely limiting the applicability of the attack in practice.
Grubbs \etal~\cite{Grubbs18:Volume} and (more recently) Gui \etal ~\cite{Gui19:Volume} improved upon the results of Kellaris \etal by presenting attacks that do not require a uniformity assumption for queries
as long as other assumptions hold---\eg that queries with all possible volumes (within a certain bound) are observed at least once; or that the underlying database is dense (\ie all $N$ values occur in the database).

In this work, we %
explore an alternative design point in the space of attacks, and
show that volume-based attacks are practical {\em without making any assumptions about the distribution of queries or the underlying data}.
Our aim is to recover the content of individual queries that search for a specific keyword in the database. 
We note that as long as a query for each keyword is issued at least once, our attack enables an adversary to reconstruct the list of all the keywords that appear in the database.

In particular, we focus on the behavior of applications that allow users to search for keywords over a {\em secondary index}, a common data structure in database systems that maps keys to a set of matching records.
In the encrypted database literature, this corresponds to the model of searchable encryption schemes~\cite{SongWP00, CurtmolaGKO06, KamaraPR12, CashJJKRS13, OgataKKM13, CashJJJKRS14, LauCSJLB14, Kurosawa14, NaveedPG14, StefanovPS14, HeAJSS14, Bost16}.

Our key insight is that by exploiting the \emph{behavior} of specific real-world applications, we can avoid assumptions made by prior volume attacks about the distribution of queries or the underlying data.
Furthermore, it allows our attack to be eminently practical, requiring only a {\em single} query to be issued by the user for recovering the keyword. %

As such, even though real-world applications today leak far more information than just the volume of results, the importance of volume attacks will only grow in the future. Privacy-conscious services have begun deploying sophisticated schemes to plug traditional sources of leakage, including access patterns (\eg the Signal messaging service~\cite{Signal, SignalDiscoveryBlog}).
The takeaway of our work is that as practitioners take steps towards enhancing the privacy guarantees of their applications, they must also account for the leakage of result volumes. %
Application-specific behavior that facilitates easy exploitation of this leakage should be revised.

\subsection{Techniques and contributions}
We start by examining 11 representative applications that enable search queries over a secondary index (\eg Gmail, Twitter, and Facebook) to identify realistic attacker capabilities that can be leveraged in concert with volume leakage. 
We find that many of these applications satisfy two key characteristics that enable us to mount efficient volume-based attacks,
even if they are built atop a cryptographic backend (such as ORAM or PIR) that plugs traditional sources of leakage, and only leaks the volume of results.

First, we find that many applications inherently allow {\em other} users to {\em inject} application data into a victim user's index. This property of applications has also been noted in prior work~\cite{CashJJJKRS14,LiuZWT14,ZhangKP16}. In our setting, it allows the attacker to potentially influence the volume of results returned by a query.

However, it is not clear how to leverage file injection alone when the only information available is the number of results.
File-injection attacks have been studied in the searchable encryption literature~\cite{CashJJJKRS14,LiuZWT14,ZhangKP16}, but these attacks rely crucially on the attacker knowing the query access patterns---namely, the set of files matching the keyword.
The key idea is that the attacker injects special files $F_1, \dots, F_n$ constructed so that each keyword is contained in a unique subset of files.
When the victim queries for a keyword, the attacker learns the exact set of files returned and hence the keyword.
In our model, though, the attacker only learns the number of files returned, not the exact set.
Ensuring that each keyword has a unique number of files means injecting $|D|$ files, where $D$ is the dictionary space.
For the English dictionary (\txtapprox $ 200K$ words), this attack would only be feasible on a small subset.
Moreover, given a set of files, there may be many different keywords that are present in the same number of files, precluding these attacks in our setting.

Instead, our strategy is to leverage a second property we observed in the real applications surveyed, which is key for making an injection attack feasible with only volume leakage:
the ability to {\em replay} queries issued by a user {\em without further user intervention}.
While this seems at first glance to be a strong assumption, we find that several applications display this behavior as a {\em built-in feature}, ostensibly to hide transient application errors.

As an example, Gmail inbox search fits our setting seamlessly, and satisfies both the aforementioned properties.
The user can search for a keyword in their emails, an attacker can inject data by simply sending the user an email with a specific keyword, and the user's query is automatically replayed by the application when the server's response is delayed without relying on user intervention.
In~\cref{s:abilities}, we define these abilities formally, demonstrate how they appear more generally in a wide range of applications, and explain why they are hard to avoid.

Given these attacker abilities, we present attack algorithms that are able to reconstruct user queries 
on secondary indices. %
Our high-level strategy, described in \cref{s:attack}, is twofold:

\begin{enumerate}
\item {\bf Inject:} the attacker injects $k$ {\em specially crafted} files that alter the number of results for a candidate query. %
\item {\bf Replay:} the attacker causes the client to automatically replay the query without user intervention.
\end{enumerate}

The process repeats to narrow the search space, without the user's knowledge. This base attack succeeds with $100\%$ probability in identifying words in the attacker's dictionary.
Specifically, given a dictionary of keywords $D$ that represents the attacker's domain of interest,
the attack recovers a queried keyword in only $O(\log_k |D|)$ replays of the query (\eg $<5$ for the English dictionary).

Subsequently, we build upon our base attack to present three ancillary attacks. The ancillary attacks show that our strategy remains feasible, albeit more expensive or less accurate, even when certain mitigation techniques are employed (\cref{sec:mitigate}).
\begin{itemize}
\item \textbf{No replay attack:} We provide an extension to our base attack that works even when query replay is not possible~(\cref{s:attack:noreplay}). While the detection accuracy decreases in this case due to the mentioned shared cardinality, the attack still succeeds with significant probability on a smaller dictionary.
\item \textbf{Attack with padding:} We further demonstrate that our attack remains possible even when padding is used to hide result counts (\cref{s:attack:padding}), through an extension to our base attack that requires more injected data but with proven effectiveness.
\item \textbf{Attack with noise:} We finally show an extension of the base attack with noisy data when the attacker does not know the result count precisely (\cref{s:attack:noise}).
\end{itemize}

In \cref{s:conjunctions}, we also discuss extensions to the attack for recovering keywords in conjunctive queries.

Finally, we demonstrate the feasibility of our techniques and attack the real Gmail web client by simulating a server-side adversary (\cref{s:eval:gmail}). The characteristics of real applications poses a number of constraints on the attack, \eg based on the behavior of replays, or the time it takes to inject files into the secondary index. Despite these constraints, we show that our attack completes within a matter of minutes for the Gmail application.
We also analyze the theoretical complexity of our attacks, and experimentally evaluate their overheads and accuracy in a variety of settings.

While there are ways to mitigate the attack, the vulnerability from result count leakage is difficult to eradicate from a system completely.
Padding to the worst-case count theoretically prevents leakage, but in many applications, this results in unaffordable overheads.
Rather than trying to reduce system leakage, we believe that the most effective mitigation is actually on the {\em application} side, although these techniques too may be burdensome because they interfere with application-specific functionality (\eg disallowing users from sending email).
In \cref{sec:mitigate}, we discuss these mitigations, but we note that in general, it is difficult to protect against the attack completely because it relies on very little from the system model, and instead exploits features inherent to the application model.

\section{Attack model}
\label{sec:setup}

In this section, we discuss the generic system model~(\cref{sec:setup:system}) and the application model~(\cref{sec:setup:application}) that is vulnerable to our attacks.
We present the three key attack assumptions:
\begin{enumerate*}
\item that the system leaks volume,
\item that the application allows data injection, and
\item that the applications automatically replays queries under certain scenarios.
\end{enumerate*}
We then demonstrate the validity of our assumptions by studying a number of concrete instances for both the system and application models. In particular, we examine 11 popular web applications that allow users to issue keyword search queries over an inverted index---we find that 
\begin{enumerate*}[label=(\roman*)]
\item all 11 applications allow attackers to inject data into the victim user's index; and 
\item 5 of the 11 applications also replay queries automatically without user intervention.
\end{enumerate*}

\subsection{System Model}
\label{sec:setup:system}

We consider systems in which an untrusted server (the adversary in our setting) maintains a secondary index in an encrypted database.
The index maps a keyword to a list of documents or database rows (referred to as {\em files}, henceforth) that the keyword appears in and is stored on the server for query efficiency.
Whenever the application proxy or the client queries the index for a keyword,
the user receives the corresponding list of files containing the keyword.
We assume that the query's execution reveals no information to the adversary except the number of results.

More formally, %
we define a database
$\mathcal{D}$ as a set of records that associates keywords with the files from a collection $\mathcal{F}$ that the keywords
appear in:
\[ \mathcal{D} = \left\{(w, f) : w \in f, f \in \mathcal{F} \right\} \]

A query for word $w$ is a function $q_w$ (where $w$ is private) that maps
$\mathcal{D}$ to a list of matching files in $\mathcal{F}$:
\[ q_w(\mathcal{D}) = \left\{ f : (w, f) \in \mathcal{D} \right\}. \]

An implementation of such a database may internally use one layer of
indirection, so that the first query returns a list of file pointers, or
indices into $\mathcal{F}$, and the subsequent queries are used to fetch the
file contents from $\mathcal{F}$.

The adversary's goal is to identify the keyword $w$ using only the size of the result set $|q_w(\mathcal{D})|$.

\begin{figure}[t]
\centering
\includegraphics[width=\linewidth]{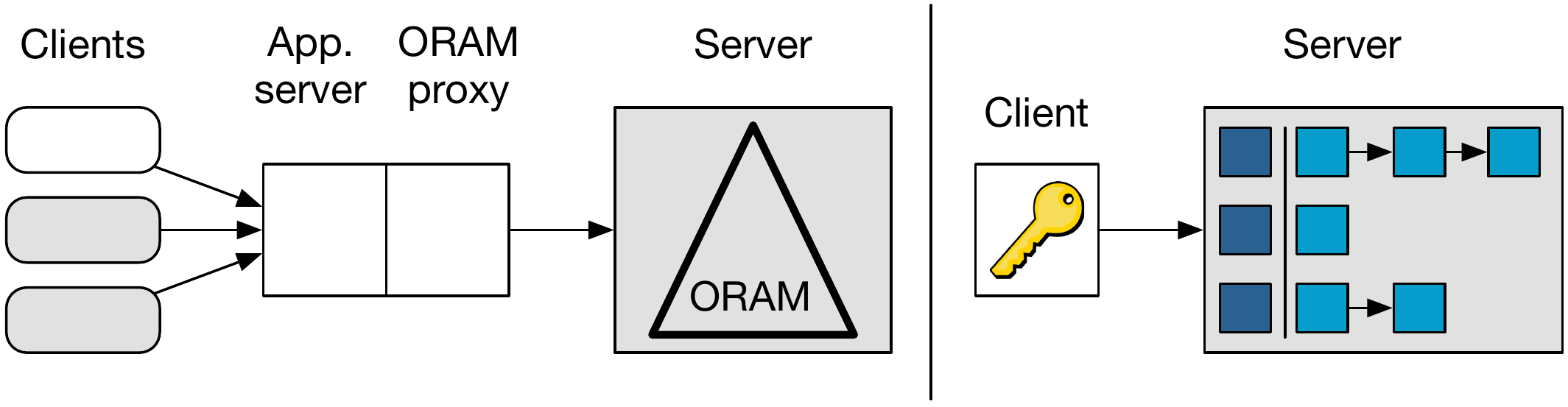}
\caption{(Left) ORAM model: the server and a subset of the
clients are untrusted (shaded). (Right) PIR model: the server is untrusted.} 
\label{fig:oram-model}
\label{fig:pir-model}
\end{figure}

\paragraph{Examples}
We now illustrate the relevance of volume-based attacks by discussing concrete examples of cryptographic systems that leak the volume of results to attackers.
We consider a client-server model where the database stored at the server is encrypted using sophisticated techniques that also hide access patterns, and the server maintains a secondary index over the encrypted data.
As we'll see in the following examples, content encryption is not sufficient to prevent volume leakage.

\textit{ORAM-based systems.}\enskip
\label{s:instances:email} \label{s:instances:oram}
In the ORAM model~(\cref{fig:oram-model}, left), the database administrator additionally backs the database
with ORAM, hiding access patterns from the server in addition to the result
contents.
However, even with a guarantee of this strength, volume leakage is possible for a passive attacker because the \emph{size} of the result contents is not hidden.
In addition, even if a layer of indirection is used, so that the first query only returns a list of file pointers, the number of files returned can still be measured by recording the number of subsequent queries made.

\iffull
In such database-backed applications, secondary indices are a common
optimization in which an index is built on a
field that is not the main (``primary'') key in that table, often not unique.
The email application is a special case of this scenario. A user is looking for emails that match a
specified keyword. Since the search could include a large fraction of the
user's inbox, it is generally performed using a reverse index stored on some
backend storage server, \eg an ORAM node, and the results are returned to the
user over a secure connection. In this case, there is a secondary index per
user, each of which can be updated by all other users.
\fi

\textit{PIR-based systems.}\enskip
\label{s:instances:pir}
In the PIR model (\cref{fig:pir-model}, right), an untrusted server owns the
database and maintains a secondary index on it for fast access. In this case,
the server sees all the data, as its role is to maintain and serve publicly
available data.
The untrusted server answers user queries in which the key requested is
private~~\cite{PIR:survey}.
However, since the data is publicly available, the untrusted server can easily learn the size of the query results.

\subsection{Application model}
\label{sec:setup:application}
\label{s:abilities}

\begin{figure*}[t!] %
\small
\begin{center}
    \begin{tabular}{lllc}
        \toprule
        \thead[l]{Application} & \thead[l]{Type of queries} & \thead[l]{File injection strategy} & \thead{Number of replays \\ in 10 minutes} \\\midrule
        Gmail & Email keywords & Send emails to victim & 6 \\ %
        Facebook & Names, post keywords & Create posts in a group page & 4 \\ %
        Dropbox & File keywords & Upload files to a shared folder & 1 \\ %
        Google Doc & File keywords & Upload files to a shared folder & 1 \\ %
        iCloud Mail & Email keywords & Send emails to victim & 1 \\ %
        Twitter & Hashtags & Post tweets with hashtags & 0 \\ %
        Piazza & Post keywords & Create posts in a class & 0 \\
        Slack & Names, message keywords & Send messages to a group channel & 0 \\
        Skype & Names, message keywords & Send messages to victim & 0 \\
        Yahoo Mail & Email keywords & Send emails to victim & 0 \\
        Outlook Mail (Hotmail) & Email keywords & Send emails to victim & 0 \\
        \bottomrule %
    \end{tabular}
\caption{%
        An empirical assessment of 11 popular web applications. In each case, we list the type of query made by the user, how the attacker can influence the result of this query by application-specific injection, and the number of replays observed. To measure replay, all server responses are dropped for 10 minutes, and we report the number of duplicate queries made within 10 minutes.}
    \label{fig:apps}
\end{center}
\end{figure*}

We assume an application model based on the behavior of actual applications that rely on a secondary index.
The model consists of two key assumptions: the ability to inject data into a user's index and the ability to replay a user query without the involvement of the user.
We argue that these two assumptions are in fact often inherent to the application.
As evidence, we survey a variety of popular web applications that rely on a secondary index and find that both assumptions hold in 5 out of the 11 surveyed.
Finally, we describe how these assumptions together make it difficult to detect our attack.

\subsubsection{File injection}
First, we consider the assumption that the attacker can inject data into a user's index.
For applications that involve interactions between multiple users, injection by other users is often necessary for application functionality.
For example, to inject into Gmail inbox search, the attacker sends the victim user an email. 
Injection is especially easy if
there is a secondary index that is shared. To inject an entry for a hashtag in Twitter, 
the attacker uploads a post with that hashtag; in Slack, the attacker simply sends a message. 
The ability of the attacker to inject in such applications is fundamental
because these applications are inherently designed for multiple users and
contain shared data. 

\subsubsection{Query replay}
Second, we assume that the attacker can replay a user's query a finite number of times.
While this assumption is certainly not universal across applications that rely on a secondary index, we find that it is surprisingly common, with many applications replaying queries automatically in the background without any user intervention.
This is because many applications are written to handle transient errors transparently, to put as little burden on the user as possible.
In particular, an application that wants to provide a seamless experience when network connectivity is spotty may retry a query automatically if the response is too slow.
Indeed, the HTTP/1.1 RFC~\cite{httprfc} specifies that
``When an inbound connection is closed prematurely, a client MAY open a new
connection and automatically retransmit an aborted sequence of [idempotent]
requests.''
A compromised server can force this behavior by simply dropping its HTTPS responses, triggering an automatic replay.

\paragraph{Examples} To show that these assumptions are realistic, we surveyed 11 applications, including Gmail, Twitter, and Facebook, and tested for the ability to inject data and replay queries for a target user~(\cref{fig:apps}).
To test for injection, we examined the application functionality to determine whether the attacker could inject data into an index searchable by the user.
To measure the number of query replays, we drop responses from the server, record all network traffic from the application client, and count the number of duplicate queries that appear within 10 minutes.
We find that for all applications, injection is possible, although sometimes only if the attacker and the user share some index (\eg they are both members of a public Facebook group).
We also find that 5 of these applications have query replay.

On further investigation of these 5 applications, we find that all retry queries automatically, though the rate of retries varies.
Two applications, Gmail and Facebook, retry the query repeatedly.
The remaining three---Dropbox, Google Drive, and iCloud Mail---retry the query once.
The number of retries is important because more retries make it easier for the attacker to identify the query.
Nevertheless, as we show in
\cref{s:eval:basic}, even a single replay is sufficient for significantly pruning the space of query
possibilities, and, in many cases, for mounting the attack feasibly.

Some applications do not replay a query automatically, as is the case with Twitter or Slack.
In~\cref{s:attack:noreplay} we provide a single-round version of our
attack that does not require queries to be replayed at all. 
This version of our
attack is predictably less effective than the base attack with the ability to
replay, but as we show in~\cref{s:eval:noreplay}, it is 
practical for small attacker dictionaries.

\paragraph{Avoiding detection} Because the attack relies on injection visible to the user, one practical concern in launching the attack is avoiding detection.
Fortunately, in many settings, our attack is {\em difficult to detect before  it completes}.
This is because once the user issues a query, the attacker can continue
to drop / block the responses to the client, causing the application to retry queries until the attack completes.

We verified this behavior with Gmail: no results are returned to the user during
the attack, and to the user it simply appears that they have a bad network connection. 
That is, once the user initiates the query, the attack will complete without
further actions from the victim user.

It is possible that the user later sees the injected emails and realizes from the
synthetic content that they are
under attack, but this happens only {\em after the attack completes}.
Further, we note that although services like Gmail may strip suspicious HTML
elements during email preprocessing, we can still use style formatting to avoid
showing the injected content to the user, to reduce suspicion. The rest of the email could show
content that is more user-friendly, \eg an ad.
It is further unlikely for spam filters to detect the injected emails, since the
attack targets a specific user.
This is just one example, but it illustrates the numerous ways that an attacker could inject data in a way that is difficult to detect before the query is reconstructed.

\section{Attacks}
\label{s:attack}

Given the attacker abilities discussed in \cref{sec:setup}, we present and
analyze a file-injection attack to recover a user's query on a secondary index.
We show that this attack can be launched on the generic database described in
\cref{sec:setup:system}, as long as the attacker can view the number of
results returned. This is true even if the result content is encrypted and a
model like ORAM is used to hide access patterns.

The general attack (\cref{s:attack:basic}) can recover a user's query with
100\% accuracy, by leveraging the three assumptions presented in
\cref{sec:setup}. One may attempt to  weaken the attacker's abilities
by making it more difficult to replay a user's query, or padding the result sets.  
We discuss extensions to the attack in
\cref{s:attack:noreplay}--\cref{s:attack:noise} and show that it is still
feasible even when various countermeasures are employed, albeit at higher
overheads and possibly imperfect accuracy. 
\cref{fig:complexity} summarizes the overheads of the base attack and its extensions.
In \cref{s:conjunctions}, we describe further extensions to the attack for recovering keywords in conjunctive queries.

\subsection{Base Attack}
\label{s:attack:basic}

At a high level, the base attack works by searching on the keyword
universe through multiple rounds of user query replay.  By recording the
result counts between rounds, the attacker can narrow down the keyword search
space by a constant factor per round.

The attacker uses file injection to influence the result count between rounds.
During each round, the attacker constructs files from the keyword search space
and injects the files into the user's index. The response for each round will
then contain some number of injected files. The attacker can use the new
result count to determine the number of files injected after the
previous round. In this way, the attacker can determine which subset of the
search space contains the user's query.

The setup of the attack is as follows: A user queries $q_w$ on a database
$\mathcal{D}$, as defined in \cref{sec:setup}. The response is the set of
matching file contents, $q_w(\mathcal{D}) = \{ f_1, \dots, f_n \}$. The goal of
the attack is to recover $w$, using only $n = |q_w(\mathcal{D})|$, the number 
of files returned.

Algorithm~\ref{alg:attack} provides pseudocode for the base attack \textsc{RecoverQuery}.
In more detail, the attacker first records the user
query's $q_w$ and the number of files returned, $n_0$. $n_0$ is the number of
files that already matched to $w$ prior to the attack.  This enables the
attacker to differentiate user-uploaded files from injected ones.

Next, the attacker proceeds in rounds to reduce the keyword search space. 
He chooses an initial dictionary $D_0$, a set of words that
might contain $w$, and a parameter $k$. During each round $j$, the attacker divides $D_j$
into $k$ equal partitions. He injects $k$ files into the database and distributes the words
among them as follows: \emph{If a word appears in the $i$-th partition, he
adds the word to exactly $i$ out of the $k$ files.} Hence, if a word appears in the $k$-th partition, 
the attacker adds this word to all $k$ files.

The attacker then replays the user's query $q_w$ on the updated database and
records the number of files returned, $n_j$. Assuming that the attacker can
block updates to the secondary index, the number of files injected since the
previous round is then $i^* = n_j - n_{j-1}$. Thus, $w$ must have been assigned
to the $i^*$-th partition during round $j$. The attacker repeats this in rounds,
each time using the $i^*$-th partition as the new dictionary, until $|D| = 1$.

\begin{algorithm}[t]
\small
    \caption{
        Pseudocode for the base attack. %
        $q_w$ is a private query for a word $w$ on a database $\mathcal{D}$.
        Each round of the attack partitions the search space by $k$.
    }\label{alg:attack}
    \begin{algorithmic}[1]
        \Procedure{\textbf{RecoverQuery}}{$q_w$, $k$}
        \State $\mathcal{D} \gets$ the initial database
        \State $D \gets$ keyword universe
        \State $n \gets |q_w(\mathcal{D})|$
        \While{$|D| > 1$}
            \For{$i$ in $[1, \dots, k]$}
                \State $F_i \gets$ an empty file
                \State $D_i \gets$ an empty dictionary
            \EndFor
            \For{$index$ in $[1, \dots, |D|]$}
                \State $w \gets D[index]$
                \State $i \gets \left\lfloor \frac{index}{{|D|} / {k}} \right\rfloor$
                \State Append $w$ to $i$ unique files in $F$
                \State Add $w$ to dictionary $D_i$
            \EndFor
            \State $\mathcal{D} \gets \textsc{InjectFiles}(\mathcal{D}, F)$
            \State $n' \gets |q_w(\mathcal{D})|$
            \State $i \gets n' - n$
            \State $D \gets D_i$
            \State $n \gets n'$
        \EndWhile
        \State \Return $D[0]$
        \EndProcedure
    \end{algorithmic}
\end{algorithm}

The complete overheads for the attack are summarized in
\cref{fig:complexity}. This attack converges in a bounded
number of rounds since each round is guaranteed to reduce the dictionary size.
Furthermore, for a high enough $k$ and a small enough $D$, the number of
rounds, \ie the number of times the attacker has to replay the user's query, is
quite low. We formalize this in the following claim:

\begin{claim}
    \label{claim:numrounds}
    For any dictionary $D$ and for any word $w \in D$, let
    $q_w$ be a private query for $w$, and $k$ be the number of partitions. Then,
    \emph{\textsc{RecoverQuery}}$(q_w, k)$ returns $w$ after $\left\lceil \log_k{|D|}
    \right\rceil$ rounds.
\end{claim}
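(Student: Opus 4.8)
The plan is to prove \cref{claim:numrounds} by induction on the rounds of \textsc{RecoverQuery}, maintaining two invariants: a \emph{correctness} invariant, that the target word $w$ always lies in the current dictionary; and a \emph{progress} invariant, that each round shrinks the dictionary by a factor of $k$. Together these yield both that the single word eventually returned is $w$ and that termination occurs after exactly $\lceil \log_k |D| \rceil$ rounds.

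First I would establish correctness of the partition-identification step. In round $j$, every word assigned to partition $i$ is appended to exactly $i$ of the $k$ injected files, so a word in partition $i$ adds precisely $i$ new matches to $q_w$. Because the attacker blocks all other updates to the secondary index between rounds (one of the attack assumptions from \cref{sec:setup}) and records the pre-attack count $n_0$ to distinguish injected files from the baseline, the observed increase $n_j - n_{j-1}$ equals exactly the index $i^\ast$ of the partition containing $w$. Hence the algorithm recovers the correct partition, and the new dictionary $D_j = D_{i^\ast}$ still contains $w$. With base case $w \in D_0 = D$, induction on $j$ gives $w \in D_j$ for every round.

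Next I would quantify the per-round progress. Each round partitions the current dictionary into $k$ blocks via the index map and keeps exactly one block, so $|D_j| = |D_{j-1}| / k$, and a simple induction gives $|D_j| = |D| / k^j$. The while loop exits precisely when $|D_j| = 1$, which first occurs at $j = \lceil \log_k |D| \rceil$; at that point the correctness invariant guarantees the lone surviving element is $w$, completing the argument.

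The main obstacle is the boundary bookkeeping when $|D|$ is not a power of $k$, in which case the $k$ blocks are not exactly equal and the factor-$k$ reduction must be written with a ceiling, $|D_j| \le \lceil |D_{j-1}|/k \rceil$. Here I would invoke the integer identity $\lceil \lceil x/k \rceil / k \rceil = \lceil x / k^2 \rceil$, so that iterating $x \mapsto \lceil x/k \rceil$ from $|D|$ reaches $1$ after the smallest $r$ with $k^r \ge |D|$, namely $r = \lceil \log_k |D| \rceil$. I would also verify that the index map assigns every word to a well-defined partition and that no word is silently dropped by the rounding, which is the only place the clean factor-$k$ story could break; everything else follows directly from the attack assumptions and needs no further argument.
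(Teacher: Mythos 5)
Your proposal is correct and follows essentially the same route as the paper's proof: maintain the invariant that $w$ survives each round while the dictionary shrinks by a factor of $k$, yielding termination after $\left\lceil \log_k{|D|} \right\rceil$ rounds. You are in fact somewhat more careful than the paper, which glosses over the non-divisible case that you handle via the identity $\left\lceil \left\lceil x/k \right\rceil / k \right\rceil = \left\lceil x/k^2 \right\rceil$ and over the partition-identification step (that $n_j - n_{j-1}$ equals the partition index) that you justify explicitly.
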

\begin{proof}
    Consider the $j$-th round of the attack, which searches a dictionary
    $D_j$ that contains $w$. $w$ is guaranteed to match to a partition of
    the dictionary that has size $\le |D_j|/k$. Thus, round $j+1$ of the
    attack will search a dictionary of size at most $|D_j|/k$ that also
    contains $w$.  The algorithm repeats until the dictionary has size
    one. At this point, \textsc{RecoverQuery} returns the only word in the dictionary, $w$. Thus,
    it takes $\left\lceil \log_k{|D|} \right\rceil$ rounds to complete the
    attack, where $D$ is the initial dictionary.
\end{proof}

The attacker must also inject a significant number of files. We show that the
number of files, along with the file size, measured in number of words, is not
too large.

\begin{claim}
    For any dictionary  $D$ and for any word $w \in D$, let
    $q_w$ be a private query for $w$ and $k$ be the number of partitions. Then,
    the total number of files injected by \emph{\textsc{RecoverQuery}}$(q_w, k)$ is $k
    \left\lceil \log_k{|D|} \right\rceil$.
\end{claim}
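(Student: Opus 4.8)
The plan is to factor the total injection count as (files injected per round) $\times$ (number of rounds), and to reuse the round count already established in \cref{claim:numrounds}. That prior claim tells us \textsc{RecoverQuery}$(q_w, k)$ terminates after exactly $\left\lceil \log_k |D| \right\rceil$ iterations of the \textbf{while} loop, so it suffices to show that each iteration injects precisely $k$ files; the result then follows by multiplication.

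To establish the per-round count, I would simply inspect the body of the loop in Algorithm~\ref{alg:attack}. At the start of each round the inner \textbf{for} loop over $i \in [1,\dots,k]$ allocates exactly $k$ files $F_1,\dots,F_k$, and the single call $\mathcal{D} \gets \textsc{InjectFiles}(\mathcal{D}, F)$ inserts all of them into the database. Crucially, this count of $k$ is fixed by the parameter $k$ and does not depend on the size of the current dictionary $D_j$: the allocation loop always ranges over the full index set $[1,\dots,k]$.

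The one point deserving a sentence of care is the final round, and any round in which the residual dictionary has shrunk below $k$. In that case the word-distribution step may leave some of the $k$ files empty, since not every partition index is populated. I would note that \textsc{InjectFiles} nonetheless inserts all $k$ files — empty files are still injected — so the per-round count remains exactly $k$ throughout. This is the only place where the statement could conceivably be off by a lower-order term, and confirming that empty files are counted as injected is the main (and still quite mild) obstacle.

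Combining the two facts, the total is $k$ files per round over $\left\lceil \log_k |D| \right\rceil$ rounds, giving $k \left\lceil \log_k |D| \right\rceil$ injected files, as claimed.
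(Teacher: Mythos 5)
Your proposal is correct and follows essentially the same route as the paper: both decompose the total as $k$ files per round times the $\left\lceil \log_k{|D|} \right\rceil$ round bound imported from Claim~\ref{claim:numrounds}. The only (cosmetic) difference is that you read off exactly $k$ files per round from the structure of Algorithm~\ref{alg:attack} (including the empty-file edge case), whereas the paper argues semantically that at least $k$ files are required so that words in the $k$-th partition raise the count by $k$; both yield the stated total.
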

\begin{proof}
    During a single round of the attack, the words in the $i$-th partition of
    the dictionary must be distributed among $i$ unique files, so that the
    number of results for the query $q_w$ during the next round will be increased by
    $i$ if $w$ was in that partition. The maximum value for $i$ is $k$, the
    number of partitions.  Therefore, each round requires injecting at least
    $k$ files. There are $\left\lceil \log_k{|D|} \right\rceil$ rounds according
    to Claim~\ref{claim:numrounds}, so we require a total of $k\left\lceil
    \log_k{|D|} \right\rceil$ file injections.
\end{proof}

\begin{claim}
    For any dictionary $D$ and for any word $w \in D$, let
    $q_w$ be a private query for $w$ and $k$ be the number of partitions. Then,
    the total number of words injected by \emph{\textsc{RecoverQuery}}$(q_w, k)$ is
    $O(k|D|)$.
\end{claim}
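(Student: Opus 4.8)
The plan is to count the word-occurrences injected during a single round and then sum that quantity over the geometric progression of dictionary sizes guaranteed by Claim~\ref{claim:numrounds}.

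First I would fix a round operating on a dictionary of size $m$. The algorithm splits this dictionary into $k$ partitions, each of size $m/k$, and appends a word lying in the $i$-th partition to exactly $i$ distinct files. Hence each word in partition $i$ accounts for $i$ injected word-occurrences, and the total number of words injected during this round is
\[ \sum_{i=1}^{k} \frac{m}{k}\cdot i = \frac{m}{k}\cdot\frac{k(k+1)}{2} = \frac{m(k+1)}{2}. \]
This is the key per-round quantity; note it is $O(mk)$.

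Next I would chain the rounds together. By Claim~\ref{claim:numrounds}, the dictionary shrinks by a factor of $k$ each round, so the sizes encountered are $|D|, |D|/k, |D|/k^2, \dots$, terminating after $\lceil \log_k |D|\rceil$ rounds. Summing the per-round cost over these sizes yields
\[ \sum_{j=0}^{\lceil \log_k|D|\rceil - 1} \frac{(k+1)}{2}\cdot\frac{|D|}{k^j} \le \frac{(k+1)|D|}{2}\sum_{j=0}^{\infty}\frac{1}{k^j} = \frac{(k+1)|D|}{2}\cdot\frac{k}{k-1}. \]
For any $k \ge 2$ the trailing constant $\tfrac{k+1}{2}\cdot\tfrac{k}{k-1}$ is $\Theta(k)$, so the bound collapses to $O(k|D|)$, as claimed.

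The main obstacle is getting the per-round count right: one must remember that a word placed in partition $i$ is written into $i$ separate files and therefore contributes $i$ (not one) to the word count, which is precisely what produces the arithmetic-series term $\tfrac{m(k+1)}{2}$ rather than a naive $O(m)$ estimate. Once that term is in hand, observing that the geometric sum is dominated by its first term up to the factor $k/(k-1)$ is routine. I would also remark that the floor/ceiling slack in the partition sizes and in the number of rounds contributes only lower-order terms and does not affect the asymptotic bound.
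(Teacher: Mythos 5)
Your proposal is correct and follows essentially the same route as the paper's proof: both compute the per-round cost as the arithmetic series $\frac{m}{k}(1+2+\dots+k) = \frac{m(k+1)}{2} = O(mk)$ over partitions, and then sum over the geometrically shrinking dictionary sizes guaranteed by Claim~\ref{claim:numrounds}, bounding by the convergent geometric series to obtain $O(k|D|)$. Your version is slightly more explicit about the constants (the $\frac{k+1}{2}\cdot\frac{k}{k-1}$ factor and the floor/ceiling slack), but the decomposition and key estimates are identical.
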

\begin{proof}
    A dictionary $D_j$ is searched during round $j$ of the attack. Each
    word in partition $i$ of the dictionary appears $i$ times during round $j$.
    Each partition has size $\frac{|D_j|}{k}$. Therefore, the total file size
    injected during this round is $\frac{|D_j|}{k} \left(1 + 2 + \dots + k\right)
    = O(k|D_j|)$.

    Each round reduces the size of the dictionary searched by a factor of $k$,
    so $|D_{j+1}| = \frac{|D_j|}{k}$. According to Claim~\ref{claim:numrounds},
    there are $\left\lceil\log_k{|D|}\right\rceil$ many rounds. Then, if the
    initial dictionary has size $|D|$, the total file size injected across all
    rounds is:
    \begin{align*}
        & k|D| + k\left(\frac{|D|}{k}\right) + k\left(\frac{|D|}{k^2}\right) + \dots + k\left(\frac{|D|}{k^{\lceil\log_k{|D|}\rceil}}\right) \\
        < \, & k|D| \left( 1 + \frac{1}{k} + \frac{1}{k^2} + \dots \right) 
        = \, O(k|D|)
    \end{align*}
\end{proof}

The attack presented can recover a user's query on a generic secondary index
with perfect accuracy, even when the file contents and metadata, except
result counts, are hidden. The number of results returned is indeed the only information we assume,
and we do not require knowledge of the distribution of the query dictionary. 

\begin{figure}[t]
\small
    \centering
    \begin{tabular}{c p{5.7cm}}
        \toprule
        \thead{Notation} & \thead[l]{Definition} \\\midrule
        $\mathcal{D}$ & The database, a secondary index mapping words to the
        files they are associated with. \\\midrule
        $q_w$ & A query for the word $w$, where $w$ is hidden. \\\midrule
        $D$ & The dictionary, a set of words probed by the attacker. \\\midrule
        $k$ & The number of partitions to search during each round. A higher
        $k$ means more files injected per round, but fewer rounds total. \\\midrule
        $n_j$ & $|q_w(\mathcal{D})|$, or the number of file results for the
        query on round $j$. For $j = 0$, this is the user's initial
        query dictionary. \\\midrule
        $m$ & A parameter for the single-round attack. A higher $m$ means more
        files injected, but higher expected accuracy. \\\midrule
        $s$ & A parameter for the noisy data attack. A higher $s$ means more
        files injected, but a greater possible amount of noise tolerated. \\
        \bottomrule
    \end{tabular}
     \caption{%
        Notation used in the described attacks.
    }
    \label{fig:notation}
\end{figure}

\begin{figure*}[ht]
\small
\renewcommand{\arraystretch}{1.2}
    \centering
    \begin{tabular}{l  c  c  c}
        \toprule
        \textbf{Attack type} & \textbf{Number of replays} & \textbf{Total files injected} & \textbf{Total words injected} \\\midrule
        Base attack & $\left\lceil \log_k{|D|} \right\rceil$ & $k\left\lceil \log_k{|D|} \right\rceil$ & $O(k|D|)$ \\
        Single-round attack & 1 & $m|D|$ & $O(m|D|^2)$ \\
        File padding (base-2 tiers) & $k\left\lceil \log_k{|D|} \right\rceil$ & $O\left( n_0 k |D|^{\log_k{2}} \right)$ & $O(n_0|D|)$ \\
        Noisy data & $\left\lceil \log_k{|D|} \right\rceil$ & $sk\left\lceil \log_k{|D|} \right\rceil$ & $O(sk|D|)$ \\\bottomrule
    \end{tabular}
    \caption{%
        The overheads of each type of attack, in terms of the
        number of query replays, files injected, and words injected. 
    }
    \label{fig:complexity}
\end{figure*}

\subsection{Single-round Attack}
\label{s:attack:noreplay}

\begin{algorithm}[t]
\small
    \caption{
        Pseudocode for the single-round attack.
        Input $m$ represents the tradeoff between file injection and
        accuracy. 
    }\label{alg:noreplay}
    \begin{algorithmic}[1]
        \Procedure{\textbf{SingleRoundInit}}{m}
        \State $\mathcal{D} \gets$ the initial database
        \State $D \gets$ keyword universe
        \For{$i$ in $[1, \dots, m \times |D|]$}
            \State $F_i \gets$ an empty file
        \EndFor
        \For{$index$ in $[1, \dots, |D|]$}
            \State $w \gets D[index]$
            \State Append $w$ to $m \times index$ unique files in $F$
        \EndFor
        \State $\mathcal{D} \gets \textsc{InjectFiles}(\mathcal{D}, F)$
        \EndProcedure
        \smallskip

        \Procedure{\textbf{SingleRoundRecoverQuery}}{$q_w$, $m$}
        \State $n \gets |q_w(\mathcal{D})|$
        \If {$n < m$}
        \State $w' \gets null$
        \Else
        \State $index \gets \left\lfloor \frac{n}{m} \right\rfloor$
        \State $w' \gets D[index]$
        \EndIf
        \State \Return $w'$
        \EndProcedure
    \end{algorithmic}
\end{algorithm}

The base attack relies heavily on the ability to replay the user's query.
Without the ability to replay, it is difficult to recover the
query with total accuracy without some knowledge of the 
distribution of words in the database. This is a
fundamental limitation of the attack---since we assume that the attacker cannot
read any file metadata other than the total number of results, the attacker cannot
differentiate between user-uploaded files and injected files during just a single
round of the user's query.

We now present a version of our attack that does not require the attacker to replay queries.
We show that the attacker can guess the user's query in a
single round with some degree of accuracy if he can inject a larger number of files. 
Moreover, if the universe of possible keywords is 
smaller in size, then the attacker can identify the query with high probability. For example, consider an attacker
who knows that Alice is sick, and wants to identify what disease she has by recovering her queries. The attacker can use the set of common diseases as the dictionary of possible keywords, the size of which is on the order of tens.
\iffull
We call this attacker strategy a {\em known query set}. 
\fi
Note that using this dictionary still does not require knowledge of Alice's query distribution.
In fact, our attack will also permit the attacker to identify that the query of the victim is not in his query set.

The key idea is as follows.
Because the attacker has only one
round to complete the attack, he must inject enough files \emph{before} the
user sends his query such that the attacker can still recover the query with
some accuracy. Although he cannot inject files multiple times as in the base attack,
he may still be able to inject a large enough quantity of files so as to filter out the noise
from user-uploaded files that match the query. 
And, as we
will see from the analysis, this can actually be done in such a way that
\emph{multiple} queries can be recovered without requiring the attacker to execute the attack repeatedly per query, 
in contrast to the base attack.

We describe the attack formally in \cref{alg:noreplay}.
First, the attacker initializes the attack using \textsc{SingleRoundInit}. 
The attacker starts with a dictionary $D$ of candidate words, and chooses a constant $m$. 
Before the user sends his query $q_w$, the attacker injects $m|D|$ files into the database $\mathcal{D}$, such that the $i$-th word
in the dictionary appears in $mi$ files. Thus, each word
appears a unique number of times and is spaced apart by at least $m$ files.

Then, when the user queries $q_w$, the attacker estimates $w$ using
\textsc{SingleRoundRecoverQuery} (\cref{alg:noreplay}). The attacker
first reads the result set size $n = |q_w(\mathcal{D})|$. If $n < m$, then $w$ is
not in the attacker's dictionary, and no more information can be gained for
this particular query. If $n \ge m$, then the attacker guesses $w'$, the $i$-th
word in the dictionary, where $i = \lfloor \frac{n}{m} \rfloor$.
With some probability, the attacker's guess is correct and $w' = w$.

The question, then, is how to choose $m$ such that we maximize the probability
that $w' = w$. Clearly, the larger $m$ is, the better, since a larger $m$ can
filter out more noise from the user's uploaded files. Given some underlying
distributions of word and query frequency, we can write the precise probability
of the attack's success. 

Formally, let $Q$ be a probability distribution over the universe of words
where $Q(w)$ is the probability that the user will query $q_w$. Let
$\mathcal{D}_0$ be the initial database, before any file injections. Then,
$|q_w(\mathcal{D}_0)|$ equals the number of user-uploaded files that would
have been returned for $w$.

\begin{claim}\label{th:single1}
    For any query $q_w$, and any $m \ge 1$, the probability that
    \textsc{SingleRoundRecoverQuery$(q_w, m)$} outputs an incorrect $w'$
    is: \[ \Pr(w' \neq w) = \sum_{w, |q_w(\mathcal{D}_0)| \ge m} Q(w) \]
\end{claim}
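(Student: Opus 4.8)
The plan is to collapse the randomness (over which query the user issues) into a deterministic, per-word success criterion, and then simply sum the query probabilities of the words on which the attack fails. The only randomness in $w'$ comes from the user's choice of $w$; once $w$ is fixed, the entire computation in \textsc{SingleRoundRecoverQuery} is deterministic, so it suffices to characterize exactly for which words $w$ the output is wrong.

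First I would compute the observed result count. Since \textsc{SingleRoundInit} injects files so that the $i$-th dictionary word appears in exactly $mi$ injected files, and the attacker observes only the total count, for a queried word $w = D[i]$ the count is the pre-injection contribution plus the injected one:
\[ n \;=\; |q_w(\mathcal{D})| \;=\; |q_w(\mathcal{D}_0)| + mi. \]
Because $i \ge 1$ we have $n \ge m$, so the procedure always takes the $n \ge m$ branch and returns $w' = D[\lfloor n/m \rfloor]$.

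The crux is the following floor identity. Writing $n/m = i + |q_w(\mathcal{D}_0)|/m$ and using that $i$ is an integer,
\[ \left\lfloor \frac{n}{m} \right\rfloor \;=\; i + \left\lfloor \frac{|q_w(\mathcal{D}_0)|}{m} \right\rfloor. \]
Hence the recovered index equals the true index $i$ --- and therefore $w' = w$ --- precisely when $\lfloor |q_w(\mathcal{D}_0)|/m \rfloor = 0$, i.e. when $|q_w(\mathcal{D}_0)| < m$; and when $|q_w(\mathcal{D}_0)| \ge m$ the recovered index is strictly larger than $i$, so $w' \neq w$. The role of the spacing $m$ between consecutive injection counts is exactly this: it makes the error threshold the clean, distribution-independent condition $|q_w(\mathcal{D}_0)| \ge m$, rather than something depending on neighboring words. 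The symmetric case $w \notin D$ is handled by the $n < m$ branch: here no injected file contains $w$, so $n = |q_w(\mathcal{D}_0)|$, the attack correctly outputs $\mathrm{null}$ exactly when $|q_w(\mathcal{D}_0)| < m$, and it fails again precisely when $|q_w(\mathcal{D}_0)| \ge m$.

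Finally I would average over $Q$. Conditioned on the user querying $w$ (which happens with probability $Q(w)$), the error event is the deterministic condition $|q_w(\mathcal{D}_0)| \ge m$ derived above, so
\[ \Pr(w' \neq w) \;=\; \sum_{w} Q(w)\,\mathbf{1}\!\left[\,|q_w(\mathcal{D}_0)| \ge m\,\right] \;=\; \sum_{w,\, |q_w(\mathcal{D}_0)| \ge m} Q(w), \]
which is the claim. I expect the only delicate point to be the floor identity and the spacing argument above; the final summation is then immediate. A minor edge case worth one sentence is when the shifted index exceeds $|D|$, but this still yields $w' \neq w$, so it leaves the error probability unchanged.
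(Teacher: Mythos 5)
Your proof is correct and follows essentially the same route as the paper's: the same case analysis ($w \in D$ versus $w \notin D$, with $|q_w(\mathcal{D}_0)| < m$ versus $\ge m$), showing the guess is right exactly when $|q_w(\mathcal{D}_0)| < m$ and then summing $Q(w)$ over the failing words. Your floor identity $\lfloor n/m \rfloor = i + \lfloor |q_w(\mathcal{D}_0)|/m \rfloor$ is just a slightly cleaner packaging of the computation the paper does inline, and your remark about the shifted index possibly exceeding $|D|$ is a minor edge case the paper glosses over without affecting the result.
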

\begin{proof}
  Consider a user query $q_w$. If $|q_w(\mathcal{D}_0)| < m$, then there are
    two cases for the result set size, either $w$ is in the dictionary
    probed by the attacker or $w$ is not. If $w$ is not in the dictionary, then
    the current database $\mathcal{D}$ is unchanged, so the count 
    observed is still $|q_w(\mathcal{D}_0)|$. Since this is less than $m$, the
    attacker will not output a guess $w'$, so we can ignore this case.
    Otherwise, suppose that $w$ is the $i$-th word in the attacker's
    dictionary.  Then, the number of injected files for $w$ is $mi$. The
    attacker will then guess the word at index $\lfloor
    \frac{|q_w(\mathcal{D})|}{m} \rfloor = \lfloor \frac{|q_w(\mathcal{D}_0)| +
    mi}{m} \rfloor = i$, so $w' = w$.

  The remaining case is when $|q_w(\mathcal{D}_0)| \ge m$. Then, it is
    guaranteed that $w' \neq w$, whether or not $w$ is actually in the
    dictionary probed by the attacker.  If $w$ is the $i$-th word in the
    dictionary, then the attacker will guess $w'$ with a dictionary index
    greater than $i$. Otherwise, the attacker will incorrectly guess that the
    user queried for a word in the dictionary.  Thus, the probability that the
    attacker guesses an incorrect $w'$ is the probability that the user will
    query for a word $w$ such that $|q_w(\mathcal{D}_0)| \ge m$.  This is
    $\sum_{w, |q_w(\mathcal{D}_0)| \ge m} Q(w)$.
\end{proof}
This claim implies that if a large enough $m$ is chosen, then the attacker will
be able to perfectly recover all user queries. For instance, if $m$ is greater
than $\max_w{|q_w(\mathcal{D}_0)|}$, then the probability of an incorrect guess is 0.
Therefore, the better the attacker can estimate the distribution $|q_w(
\mathcal{D}_0)|$ and $Q$, the more he can
increase his probability of recovering the user's query correctly. Otherwise,
he will have to guess a large enough $m$ to ensure accurate query recovery, at
the cost of more file injection.

The query's success rate is also dependent on the dictionary of words chosen by
the attacker. If $Q(w) = 0$ for all $w \in D$, for example, the attacker will
not be able to output a correct guess. Ideally, the attacker would insert the
entire universe of words, but this is infeasible since the total number of
words injected is given by:
$m + 2m + \dots + |D|m = O(m|D|^2)$.
However, even if the attacker can only afford to
probe a small dictionary, he can still increase his chance of success
if he has some knowledge of $Q$; he can then choose to probe words that the
user is more likely to query.

There are two key advantages of this approach over the base scheme presented
in \cref{s:attack:basic}. First, the initial round of file injections can be
reused to recover multiple user queries over a long period of time. As long as
the attacker chooses a large enough $m$, the noise due to files that may be added
by the user later on can still be filtered out. The attacker can launch a
long-running attack in which he continuously probes for the same dictionary of
words by gradually increasing $m$ to match the rate at which real files are
added. Then, at any point in the future when the user queries for
a word in the dictionary, the attacker will be able to discover the word.

The second advantage is that this variation of the attack is mostly passive, in
that the attacker actively injects files once and then passively reads file
responses for the remaining duration. This is in contrast to the base attack,
in which the attacker must actively inject new files with every query response.
Thus, although the file injection overhead becomes higher and the success rate
is reduced, an attack without the ability to replay a user's query is still
both possible and practical.

\subsection{Attack Against File Padding}
\label{s:attack:padding}

An obvious countermeasure to the attack outlined in \cref{s:attack:basic}
is to use a cryptographic scheme that pads query responses to hide the number of files returned. 
Note that padding might not always be possible because it potentially adds 
nontrivial bandwidth overheads and hence increases costs for a system operator.

Padding interferes
with the attacker's ability to determine the number of files returned for a
query. However, as we show in this section, the attacker can still learn
some information. The
common issue in all of the following schemes is that the attacker retains the
ability to inject files. Thus, even if the attacker can no longer determine a
user's query, he can still inflate the
bandwidth overhead by injecting large enough files.

The simplest scheme would be to always pad to the worst case. Formally, the
largest possible response is $max_w |q_w(\mathcal{D})|$. 
The scheme must then pad every result set to this count, which is potentially very expensive. The attacker can
aggravate the problem by simply injecting a large number of files that all contain the same word, forcing the system to
send that many files in response to all requests. 
\iffull
Even without the attacker's injecting files, common words can appear in a large number of
user-uploaded documents.
\fi

A more practical scheme is to use \emph{tiered padding}. In this case, each
response is padded to one of several predefined sizes, or tiers. For example,
one could choose to use base-2 exponential padding, so that each response size
is rounded up to the nearest power of 2.

Tiered padding can deter the base attack, but comes at the cost
of expensive bandwidth overheads. Here, we analyze the number of files that the
attacker must inject and the bandwidth overhead for the server. We use
exponential tiered padding for the analysis, but a similar analysis applies
to any padding scheme. Recent works~\cite{Kellaris:2017:privsearch,
Kuzu:2014:privsearch} propose more efficient padding schemes that add
probabilistic noise to the result set size; in~\cref{s:attack:noise} we describe an
extension of our attack that applies to such scenarios.

Recall that on every round, the attacker records $n_{j-1} =
|q_w(\mathcal{D})|$, the number of files returned to the user's query at the
end of the previous round. Under this padding scheme, $n_{j-1} = 2^p$ for some
$p$. The actual number of files is then in the range $(\frac{n_{j-1}}{2}, n_j]$. In
the next round, the attacker must inject enough files to ensure that the query
will be padded to the next highest tier, so that there is a measurable
difference in the query's number of results. This is at least $\frac{n_{j-1}}{2}$
files.

In a direct translation of the base attack, the
attacker has to inject $k$ partitions of the dictionary in such a way that he
can differentiate between the partitions. Then, the attacker would have to
inject $\frac{n_{j-1}}{2}$ files for the first partition, $\frac{n_{j-1}}{2}
\times 2$ for the second, and so on. This leads to 
${n_{j-1}}{2^{k-2}}$ files injected for a single round. Even worse, $n_j =
n_{j-1} 2^{k-2}$, so the next round will also require an exponentially larger
number of files to be injected.

The number of files injected can be reduced by injecting the partitions one at
a time, with $\frac{n_{j-1}}{2}$ files per partition. After all of the
files for a partition are injected, the attacker can replay the query to
measure if the user's query matched that partition. This way, the attacker can
inject $\frac{kn_{j-1}}{2}$ files per round. However, this requires
increasing the number of query replays by a factor of $k$, since each round now
requires $k$ replays instead of one.

\begin{claim}\label{th:pad1}
    Consider a database that uses base-2 exponential tiers to pad query
    responses.  For any user query $q_w$, let $n_0 = |q_w(\mathcal{D})|$, the
    query response on the initial database. For any attacker dictionary
    $D$ and any number of partitions $k$ used during the search, the total
    number of file injections necessary to recover the query is $O(n_0 k
    |D|^{\log_{k}2})$.
\end{claim}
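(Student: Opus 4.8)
The plan is to track two quantities across the $\lceil \log_k |D| \rceil$ rounds guaranteed by Claim~\ref{claim:numrounds}: the number of files injected during each round, and the padded result count $n_j$ at the end of round $j$. The key structural fact to isolate is that the padded count \emph{doubles} every round, so that the per-round injection cost forms a geometric series whose sum is dominated by its last term.

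First I would pin down the per-round cost. Following the improved one-partition-at-a-time strategy, in round $j$ the attacker injects the $k$ partitions sequentially, each consisting of $\frac{n_{j-1}}{2}$ files, replaying after each. Since the true (unpadded) count at the start of the round lies in $\left(\frac{n_{j-1}}{2}, n_{j-1}\right]$, adding $\frac{n_{j-1}}{2}$ files to the partition containing $w$ pushes the true count into $\left(n_{j-1}, \frac{3 n_{j-1}}{2}\right]$, which (as $n_{j-1}=2^p$) pads up to exactly $2 n_{j-1}$; files for partitions not containing $w$ never match $q_w$ and so leave the count unchanged. Hence round $j$ costs at most $\frac{k\, n_{j-1}}{2}$ injected files and, crucially, yields the recurrence $n_j = 2 n_{j-1}$, so that $n_{j-1} = 2^{j-1} n_0$.

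Then I would sum. With $R = \lceil \log_k |D| \rceil$ rounds, the total file count is
\[ \sum_{j=1}^{R} \frac{k\, n_{j-1}}{2} = \frac{k n_0}{2} \sum_{j=1}^{R} 2^{j-1} = \frac{k n_0}{2}\left(2^{R} - 1\right). \]
The final algebraic step rewrites the doubling factor: since $R \le \log_k |D| + 1$ we have $2^{R} = O\!\left(2^{\log_k |D|}\right)$, and the identity $2^{\log_k |D|} = |D|^{\log_k 2}$ (both equal $e^{(\ln 2)(\ln |D|)/\ln k}$) gives $2^{R} = O\!\left(|D|^{\log_k 2}\right)$. Substituting yields the claimed bound $O\!\left(n_0\, k\, |D|^{\log_k 2}\right)$.

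The main obstacle is the second step---establishing the clean recurrence $n_j = 2 n_{j-1}$. One must argue carefully that, regardless of where within its tier the true count starts, a single partition's worth of $\frac{n_{j-1}}{2}$ files both \emph{suffices} to cross into the next tier and never \emph{overshoots} past it, and that the non-matching partitions contribute nothing to $q_w$. This exact-doubling behavior is what keeps the growth of $n_j$ geometric with ratio $2$ rather than $k$; any looser bound here would inflate the exponent and destroy the $|D|^{\log_k 2}$ factor that is the heart of the claim.
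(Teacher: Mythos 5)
Your proof is correct and follows essentially the same route as the paper's: the per-round cost of $\frac{k\,n_{j-1}}{2}$ files from the one-partition-at-a-time strategy, the doubling recurrence $n_j = 2n_{j-1}$, the $\left\lceil \log_k{|D|} \right\rceil$ round count via Claim~\ref{claim:numrounds}, and the identity $2^{\log_k{|D|}} = |D|^{\log_k{2}}$ to finish. The only difference is that you explicitly verify the tier-crossing step (that $\frac{n_{j-1}}{2}$ injected files suffice to reach the next tier without overshooting it), which the paper asserts in the text preceding the claim rather than inside the proof itself.
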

\begin{proof}
    During round $j$ of the attack, where $n_{j-1}$ is the observed number of
    files returned by $q_w$ during the previous round, the attacker must inject
    $\frac{n_{j-1}}{2}$ files for each of the $k$ partitions. Then, the
    attacker must inject $\frac{kn_{j-1}}{2}$ files during round $j$. Each
    round doubles the number of files returned, so that $n_j = 2n_{j-1}$.
    There are $\left\lceil \log_k{|D|} \right\rceil$ rounds by the same analysis
    as in Claim~\ref{claim:numrounds}. Then, the attacker must inject a total
    of $O(n_0 k2^{\log_{k}|D|}) = O(n_0 k |D|^{\log_{k}2})$ files to recover $q_w$.
\end{proof}

\begin{claim}\label{th:pad2}
    For any query $q_w$, the total size of files injected, measured in number
    of words, is $O(n_0|D|)$.
\end{claim}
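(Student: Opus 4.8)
The plan is to reuse the per-round injection structure already established in the file-padding discussion and in the proof of \cref{th:pad1}, but to count \emph{words} rather than files. Recall that in round $j$ the attacker processes the $k$ partitions of the current dictionary $D_j$ one at a time, injecting $\frac{n_{j-1}}{2}$ files per partition; since the attacker cannot tell which word of a partition is $w$, every such file must contain \emph{all} words of that partition, i.e. $\frac{|D_j|}{k}$ words each. First I would fix this observation, as it is the crux: the word cost of a single partition in round $j$ is $\frac{n_{j-1}}{2}\cdot\frac{|D_j|}{k}$, and since at most $k$ partitions are probed before $w$'s partition is located, the total number of words injected in round $j$ is at most $k\cdot\frac{n_{j-1}}{2}\cdot\frac{|D_j|}{k}=\frac{n_{j-1}\,|D_j|}{2}$.

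Next I would substitute the two recurrences governing the attack. As in \cref{th:pad1}, each successful round doubles the padded result count, so $n_{j-1}=2^{j-1}n_0$; as in \cref{claim:numrounds}, each round shrinks the searched dictionary by a factor of $k$, so $|D_j|=|D|/k^{j-1}$. Plugging these in, the word cost of round $j$ becomes
\[
\frac{n_{j-1}\,|D_j|}{2}=\frac{1}{2}\,2^{j-1}n_0\cdot\frac{|D|}{k^{j-1}}=\frac{n_0|D|}{2}\left(\frac{2}{k}\right)^{j-1}.
\]
Summing over the $R=\lceil\log_k|D|\rceil$ rounds then gives a geometric series $\frac{n_0|D|}{2}\sum_{j=1}^{R}(2/k)^{j-1}$ in the ratio $2/k$.

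The final step is to bound this sum. For $k>2$ the ratio $2/k<1$, so $\sum_{j\ge 1}(2/k)^{j-1}=\frac{k}{k-2}=O(1)$ with $k$ treated as a constant, and the total word count is dominated by its first-round term $\frac{n_0|D|}{2}$, yielding $O(n_0|D|)$ as claimed. I expect the main obstacle to be pinning down the per-file word count correctly---recognizing that each padding file must carry an entire partition's worth of words ($|D_j|/k$), rather than a single word as in the file-count analysis of \cref{th:pad1}---since this is exactly what converts the exponentially growing \emph{file} count into a convergent \emph{word} count. A secondary point worth flagging is the dependence on $k>2$: the series collapses to $O(1)$ only when the per-round shrinkage factor $k$ outpaces the per-round doubling of $n$, so the clean $O(n_0|D|)$ bound should be stated for $k\ge 3$ (the borderline case $k=2$ incurs an extra $\log|D|$ factor).
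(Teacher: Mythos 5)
Your proof follows essentially the same route as the paper's: you count the per-round word cost as (files injected per round) $\times$ (one partition's words per file), i.e. $\frac{kn_{j-1}}{2}\cdot\frac{|D_j|}{k}=\frac{n_{j-1}|D_j|}{2}$, substitute the same two recurrences $n_{j-1}=2^{j-1}n_0$ and $|D_j|=|D|/k^{j-1}$, and sum the resulting geometric series in ratio $2/k$, exactly as the paper does. Your closing caveat is actually a small correction to the paper rather than a gap in your argument: the paper asserts the $O(n_0|D|)$ total ``with $k\ge 2$,'' but at $k=2$ the series has ratio $1$ and the sum over $\lceil\log_2|D|\rceil$ rounds incurs an extra $\log|D|$ factor, so the clean bound indeed requires $k\ge 3$ as you note.
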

\begin{proof}
    To compute the total number of words injected, we first consider the number of
    words injected during round $j$. The number of files injected is
    $\frac{kn_{j-1}}{2}$. Each file contains a copy of a single partition of
    the current dictionary.  Since the dictionary size is reduced by a factor
    of $k$ with each round, the current dictionary has size $\frac{|D|}{k^{j}}$.
    Then, the total number of words injected during a single round is
    $\frac{n_{j-1}|D|}{2k^{j-1}}$. $n_{j} = n_0 2^j$, so the number of words
    injected during a single round is
    $O\left({n_0|D|}\left(\frac{2}{k}\right)^{j-1}\right)$. With $k \ge 2$, this
    gives us a total of $O\left( n_0|D| \right)$ words injected across all
    rounds.
\end{proof}

While the overhead for the attacker is significant, this analysis does not take
into account the substantial bandwidth costs for the client. Every
query may require nearly doubling the query's number of results. To answer the
attacker's replayed queries, the cryptographic scheme needs to pad the files with
approximately as much data as the attacker must inject, to hide the
number of files returned. This can be prohibitively expensive for a database
system.

In the PIR setting, an even more effective version of our
attack is possible: the server has the ability to also delete data in addition
to injecting it. So the attacker can toggle the number of results for a keyword
over multiple padding sizes, instead of just increasing it.

\begin{figure*}[t]
\begin{minipage}[t]{0.28\textwidth}
\includegraphics[width=\linewidth]{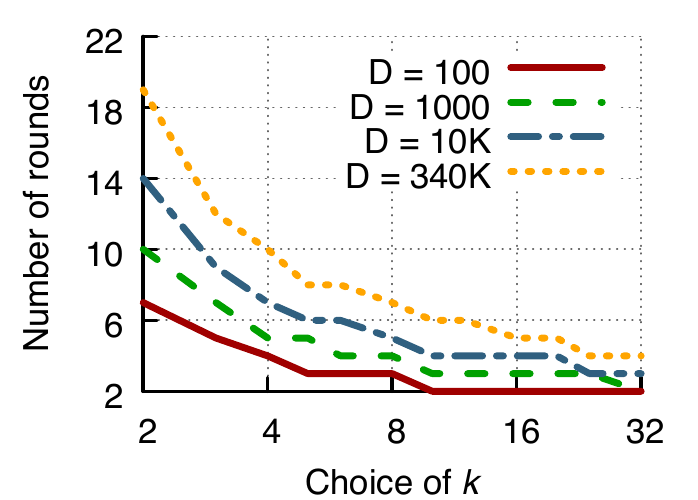}
\caption{\textbf{(Base attack)} Number of rounds required to identify a keyword with varying choices of
$k$ and different dictionary sizes.}
\label{fig:basic:rounds}
\end{minipage}
\hspace{0.21cm}
\begin{minipage}[t]{0.28\textwidth}
\includegraphics[width=\linewidth]{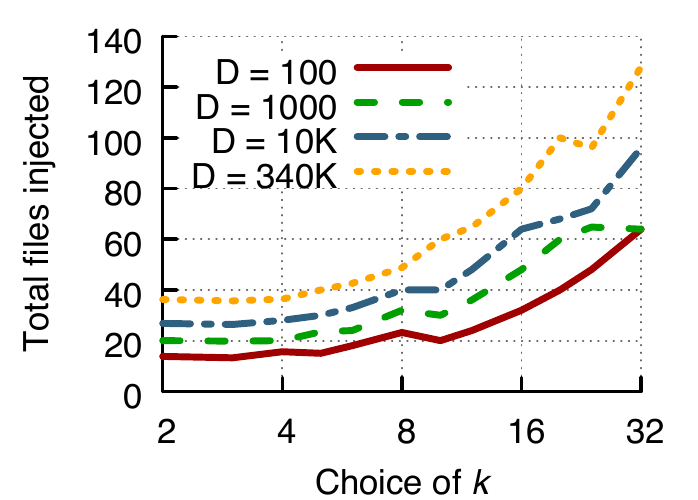}
\caption{\textbf{(Base attack)} Total number of files injected with varying choices of $k$, across different dictionary sizes.}
\label{fig:basic:files}
\end{minipage}
\hspace{0.21cm}
\begin{minipage}[t]{0.4\textwidth}
\includegraphics[width=\linewidth]{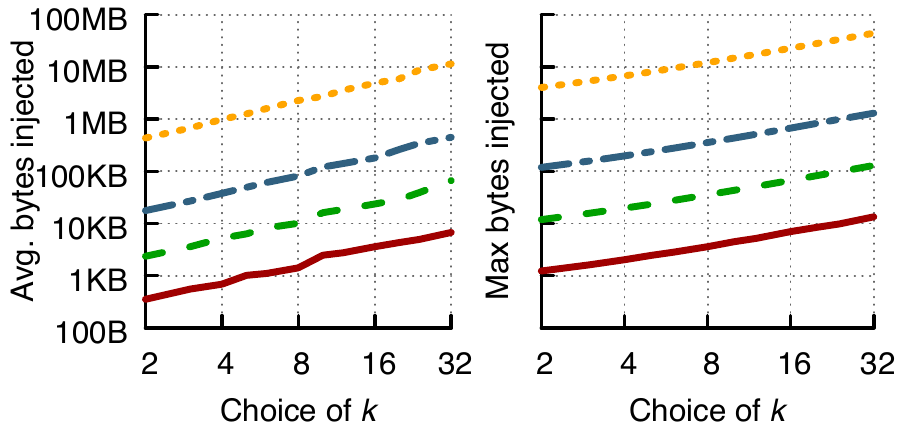}
\caption{\textbf{(Base attack)} Number of bytes injected for varying choices of
$k$, across different dictionary sizes: (left) average bytes per round; (right)
maximum bytes across rounds.}
\label{fig:basic:bytes}
\end{minipage}
\end{figure*}

\subsection{Attack with Noisy Data}
\label{s:attack:noise}

In \cref{s:attack:basic}, we assume that the attacker can identify precise
result counts. That is, we assume that the observed number of results is exactly
equal to $|q_w(\mathcal{D})|$, to the number of files that matched $w$ in the
database $\mathcal{D}$. This allows the attacker to precisely measure the
change in query result sets between rounds due to injected files.

However, the change measured may not exactly equal the number of injected
files. %
For example, if the cryptographic scheme adds some noise to
the result sets, then the attacker cannot precisely identify the change in
result counts due to file injection. Another example appears in some searchable encryption schemes, where the results are batched together in blocks (say $m$ results per block), or when using ORAMs~\cite{Roche:vORAM} that attempt to hide the number of results within an ORAM Path. 
For the first, the attacker observes the number of blocks
so it can estimate the actual number of results within an error of \txtapprox$m$. For the second, we discuss in \cref{sec:relwork} that such ORAMs still reveal an approximate number of results in some realistic settings. 

In this section, we show that our attack still has a significant chance of success 
even if there is some noise in the volume measurements. %
Formally, if the attacker expects a noise of at most $s$ files in some time
interval, this means that for each word $w$, the database system can add up to
$s$ elements to the database $\mathcal{D}$. Each of these elements is of the
form $(w, f)$, where $f$ is a dummy file. If a user queries $q_w$ on every
interval, he can expect an increase of at most $s$ files with each new query.

Similar to ideas presented in the above scenarios, the attacker can still
recover a query $q_w$ if he can inject more files to filter out noise in the
database. In particular, with an expected noise of $s$ in between rounds of an
attack, he can repeat a word in the $i$-th partition $si$ times instead of just
$i$. If the user's query word $w$ falls in partition $i$ during round $j$, then
the number of results observed will be $n_j \le n_{j-1} + si + s$. Then, to
determine the partition that the user's query belonged in, he can compute
$\lfloor \frac{n_j - n_{j-1}}{s} \rfloor = i$.

Thus, assuming that the attacker can correctly guess the maximum noise that
will be added during any round, the attacker can still recover the user's query
with perfect accuracy. The attacker can estimate $s$ with some knowledge of the
application. For instance, he can record the rate of incoming email for an
average user, for example. Furthermore, if the attacker underestimates $s$ and
guesses a wrong partition, he will quickly discover his error since the query's
result set size is unlikely to match any partition during the following round.

The overhead to overcome noise is quite low; a factor of $s$ to the number of
files and words injected. Thus, even if the attacker is unable to perfectly
measure the number of results and/or block network traffic, he can still recover the
user's query with near-perfect accuracy.

\subsection{Queries with multiple keywords}
Our attacks so far focus on queries with single keywords.
However, applications may allow clients to issue queries with multiple keywords as well.
One way in which applications may handle such queries is by expressing the queries as a conjunction of the different keywords. For such cases, we describe an extension to our attack in Appendix~\ref{s:conjunctions}. 
We present two attacks for conjunctive queries---the first optimizes the number of required replays while the second reduces the number of files injected.

Applications may alternatively express multi-keyword queries as disjunctions of the different keywords instead. We leave attacks on disjunctive queries to future work. 
\section{Evaluation}
\label{sec:eval}

\label{s:eval}

In this section, we evaluate the overheads and accuracy for the base attack (\cref{s:eval:basic}) and its extensions.
We simulate various application settings and degrees of attacker ability, including an
attacker that cannot replay the query (\cref{s:eval:noreplay}), and a storage
system with file padding (\cref{s:eval:padding}).
Second, we present a case
study on Gmail to evaluate the feasibility of the attacker's abilities assumed
in the base attack in a real-world application (\cref{s:eval:gmail}).
We demonstrate successful attacks on Gmail by simulating a server-side adversary, that complete within a few minutes across a variety of dictionary sizes.

\subsection{Setup}\label{s:eval:setup}

In all experiments, we use the entire corpus of emails from the Enron email
dataset~\cite{Enron} as the queried documents, consisting of
\txtapprox500K emails belonging to 151 users and \txtapprox2.5GB in size. We extracted
keywords from this dataset by first stemming the words~\cite{Porter}, and then removing 675 stopwords. We next
filtered out any words that contained non-alphabetic characters, or were $\ge20$
or $\le3$ characters long. This gave us a total of \txtapprox259K keywords.
In our experiments, we only used the top \txtapprox123K keywords (\ie
those that appeared in $>3$ documents) in order to remove noise from the
dataset.

Since an attacker's dictionary may contain words that do not exist in the
queried documents, we supplemented the Enron keywords with a corpus of English
words~\cite{English}. Preprocessing the English words in a similar manner yielded a total of \txtapprox257K keywords.
The union of both datasets resulted in a universe of
\txtapprox342K keywords.

\subsection{Base Attack}
\label{s:eval:basic}

Assuming that the queried word is in the initial dictionary chosen by the
attacker, the base attack achieves perfect query recovery, with strict
bounds on the overheads necessary in number of query replays and data injected
(as described in \cref{fig:complexity}). Our simulation of the attack in
\cref{fig:basic:rounds,fig:basic:files} confirms the theoretical guarantees.

In this experiment, we build the attacker's dictionary $D$ by randomly sampling
keywords from the keyword universe. We pick the keyword queried by the user at
random from $D$ in order to stress
test the effort required by the attacker---a keyword not in the
$D$ would be trivially detected at the end of a single round without
requiring further replays. We then report the number of rounds required to
guess the keyword with $100\%$ accuracy for different choices of $k$ in \cref{fig:basic:rounds}, and the total number of files injected across rounds in \cref{fig:basic:files}. Recall
from \cref{s:attack:basic} that any instance of the attack converges after
exactly $\left\lceil \log_k{|D|}\right\rceil$ replays and $k\left\lceil \log_k{|D|}
\right\rceil$ files injected, where $k$ is an integer chosen by the attacker.
Thus, with a dictionary of fixed size $|D|$, the parameter $k$ represents a
tradeoff between the number of query replays required vs. the number of file
injections required. The attacker's choice of $k$ then depends on the attacker's
ability to replay the query and the rate at which files can be injected for the
target application.

We explore this tradeoff with fixed-size dictionaries in
\cref{fig:basic:bytes}, which demonstrates how the average number of bytes
injected per round increases with $k$ (while the number of rounds decreases).
In the worst case where the dictionary comprises the entire keyword universe
and $k=24$, the bytes injected per round is still less than 10MB, demonstrating
the feasibility of the attack. We also show the maximum number of bytes
injected across any round in \cref{fig:basic:bytes}, equivalent to the number
of bytes injected during the first round.  The number of bytes that the
attacker can inject during a single round must be at least as large as this
number. We find that even in the worst case, this is approximately 50MB.

\paragraph{Takeaway} The attack can be mounted easily even when queries are
replayed at most once, \ie the attacker can recover the keyword in merely two
rounds without having to inject more than several tens of MBs of data. As an
example, Gmail limits the size of emails to a comfortable 25MB~\cite{gmailsize},
and the attacker need only send 3-4 emails to the victim's inbox in order to
identify the query.

\subsection{Single-round Attack}
\label{s:eval:noreplay}

\begin{figure*}[t]
\begin{minipage}[t]{0.32\textwidth}
\includegraphics[width=\linewidth]{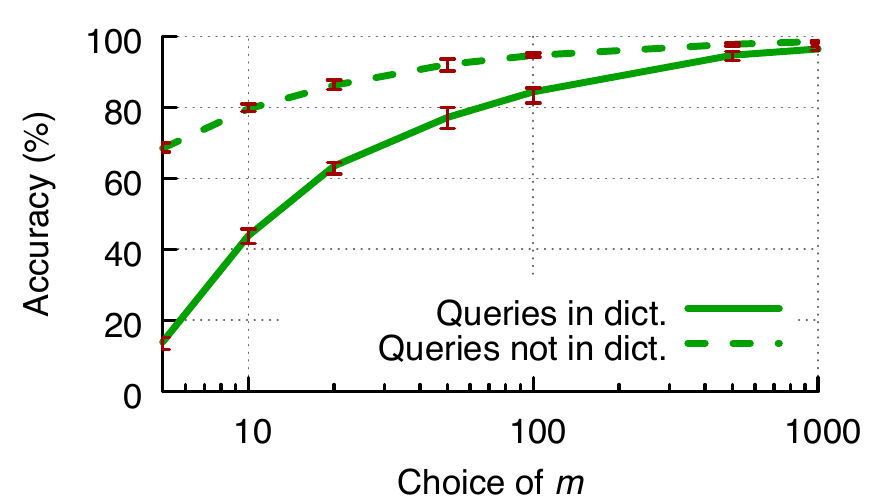}
\caption{\textbf{(Single-round attack)} Accuracy with varying $m$ across different dictionary sizes.}
\label{fig:noreplay}
\end{minipage}
\hfill
\begin{minipage}[t]{0.32\textwidth}
\includegraphics[width=\linewidth]{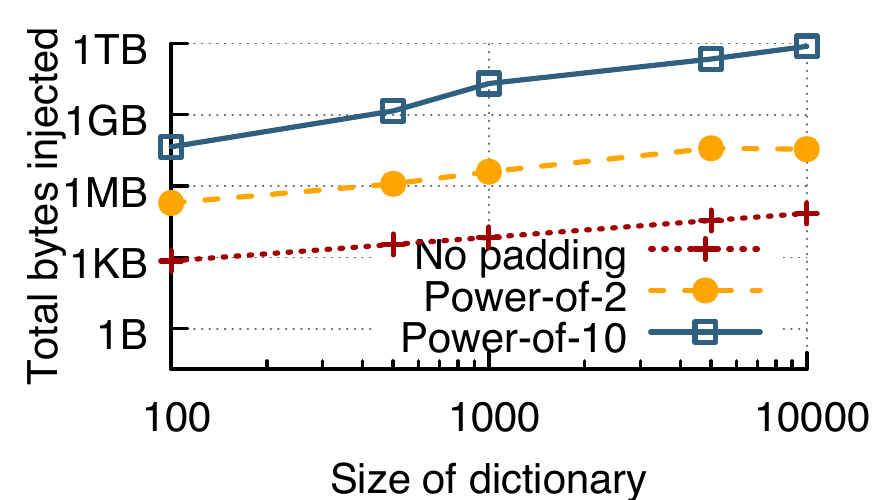}
\caption{\textbf{(File padding)} Overhead incurred by attacker 
when query responses are padded.}
\label{fig:pad}
\end{minipage}
\hfill
\begin{minipage}[t]{0.33\textwidth}
\includegraphics[width=\linewidth]{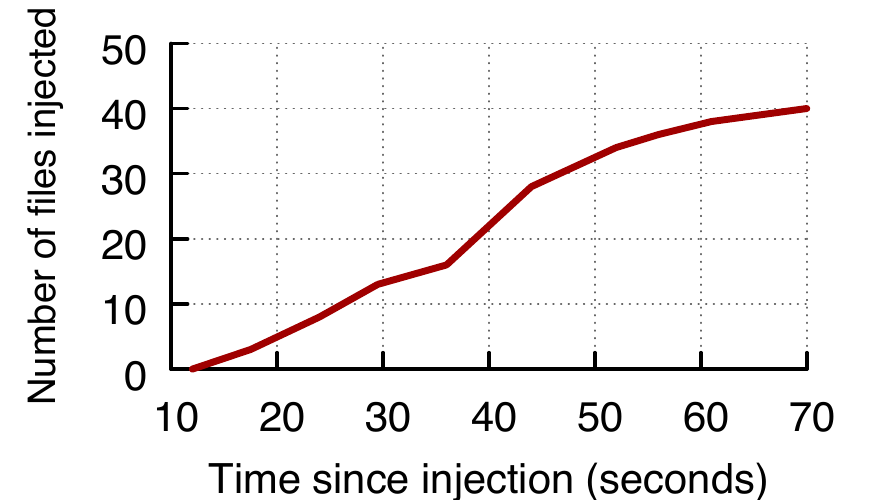}
\caption{\textbf{(Gmail)} CDF measuring the time it takes to inject files into the
    index for Gmail.
    }
\label{fig:gmail:injection}
\end{minipage}
\end{figure*}

In the single-round variation of the attack, we sacrifice some accuracy but do
not require the ability to replay the user's query. This variation of the
attack is also stronger in that it can be used to identify multiple queries
over a long period of time, whereas the base attack must be
instantiated once for every query of interest. Also, this variation is a mostly
passive attack, since the bulk of the attack is spent reading query responses,
rather than also injecting files in an online fashion.

We evaluate this attack by measuring its accuracy while varying the parameter $m$ chosen by the attacker. Recall that $m$ represents the tradeoff between the file injection overhead
and the number of files injected (\cref{s:attack:noreplay}). In this
experiment, we only query keywords that exist in the Enron dataset, since
keywords that do not exist in the dataset will always be accurately detected for
any choice of $m$.

For each value of $m$, we measure the attack's accuracy in two scenarios:
(i)~when the queried keyword is in the attacker's dictionary and the attacker
guesses the keyword; and (ii)~the keyword is not in the dictionary,
and the attacker determines that the keyword is not of interest. In each
scenario, we first fix the dictionary, and then inject a single round
of files at the beginning of the simulation for a chosen value of $m$. We then
query 1000 randomly selected keywords and measure the percentage of accurate
guesses.
\cref{fig:noreplay} plots the accuracy of guesses as $m$ increases, averaged
over different dictionary sizes.

Predictably, the accuracy of the attack improves with $m$, while the number of
file injections required also increases. For $m = 5$, we need only inject $5|D|$
files, but achieve an
accuracy rate of only \txtapprox$14\%$ for words that belong to the attacker's
dictionary. For $m = 1000$, we achieve an accuracy of
\txtapprox$96\%$, but must inject $1000|D|$ files.
In practice, a lower value of $m$ might not only suffice but also be
feasible: for $m=20$ we achieve an accuracy of \txtapprox$64\%$, while the number
of bytes injected for a dictionary of size 10K is \txtapprox$7$GB.

\paragraph{Takeaway} The barrier to mount an attack is higher in the absence of
replays, and the attacker needs to inject several GBs of data to identify a
keyword with reasonable confidence. However, in scenarios where the attacker's
dictionary contains a small number of words (when the attack knows a candidate list of queries, as discussed in \cref{s:attack:noreplay}), the feasibility of the attack
increases proportionately. For example, an
attacker who wishes to identify the disease that a victim might have only needs 
a dictionary of \txtapprox$950$ keywords~\cite{cdcdiseases}. If the
attacker is interested
only in sexually transmitted infections (STIs), then the size of the dictionary
drops to \txtapprox$27$ keywords~\cite{STIs}, increasing the feasibility of the
attack manifold.

\subsection{Attack Against File Padding}
\label{s:eval:padding}

We assess the feasibility of the attack when the server pads query responses
to one of several predefined sizes. In such cases, though it is still possible
for the attacker to guess the queried keyword,
it also requires greater effort. 
As described in \cref{s:attack:padding}, the attacker can choose to either
minimize the number of rounds, or minimize the number of files injected. In this
experiment, we evaluate the latter strategy. Specifically, we measure the
overhead incurred by the attacker when the server pads responses to powers of
2 and 10, and compare it with a baseline where the responses are unpadded.

We build the attacker's dictionary by randomly selecting keywords from the Enron
dataset, and then measure the overhead for {\em each} keyword in the dictionary.
We use this setup to stress the number of files the attacker will have
to inject, since responses for non-existent keywords will get padded to a size
of 1 by the server. 

\cref{fig:pad} illustrates the attacker's overhead in
terms of the total number of bytes injected with varying dictionary sizes.
When responses are padded to a power of 2, the attacker has to inject a feasible
\txtapprox$35$MB on average to mount the attack, with a large dictionary
size of 10K keywords. 
\iffull
However, the overhead increases drastically for the same
dictionary size when the responses are padded to a power of 10, requiring
\txtapprox0.75TB to be injected. Even so, a small dictionary size of 100 keywords
only requires \txtapprox45MB of injected data to pinpoint the
keyword.
\else
Though the overhead increases dramatically when the responses are padded to a power of 10, 
a small dictionary size of 100 keywords still requires only \txtapprox45MB of injected data to pinpoint the
keyword.
\fi

\paragraph{Takeaway} Padding responses is a viable defense {\em only if}
the following conditions hold simultaneously: (i)~the quantum of
padding is high, and (ii)~the attacker's dictionary of interest is large. In
all other situations, the attack remains feasible as demonstrated above.

\subsection{Case Study: Gmail Inbox Search} %
\label{s:eval:gmail}

\begin{figure}[tp]
\small
\begin{center}
    \begin{tabular}{c  c  c  c  c  c}
        \toprule
        \multirow{2}{*}{{$|D|$}} & \multirow{2}{*}{{$k$}} & \multicolumn{2}{c}{\thead{No.\ of replays}} & \multirow{2}{1.1cm}{\centering\textbf{Total injected emails}} & \multirow{2}{1.2cm}{\centering\textbf{Attack duration}}\\
        \cline{3-4}
        & & \thead{Theoretical} & \thead{Actual} & & \\
        \midrule
        10 & 10 & 1 & 1 & 10 & 1min \\ 
        100 & 10 & 2 & 2 & 20 & 2min 5s\\
        1K & 32 & 2 & 2 & 63-64 & 2min 5s\\
        10K & 22 & 3 & 3 & 64 & 5min 6s\\
        100K & 18 & 4 & 5 & 71 & 7min 10s\\\bottomrule
    \end{tabular}
    \caption{Attack parameters and duration for Gmail across various dictionary sizes.
    }
    \label{fig:gmail:diffdicsizes}
    \end{center}
\end{figure}

So far, we have experimentally validated the theoretical performance of our attacks across various parameter choices. 
We now demonstrate the practical feasibility of our attack in real-world applications by attacking Gmail's inbox search feature.
We attack the real Gmail web client, assuming that the server maintains a secondary index over the user's 
(encrypted) emails and only learns the volume of query results. 
Note that this is not currently the case, and Gmail's servers have access to far more leakage than simply the volume of results.
However, our aim is to demonstrate that even if Gmail (and other real-world applications) were to deploy sophisticated privacy-preserving mechanisms at the server such as ORAM or PIR, the volume of query results remains a potent source of leakage.
Therefore, in this experiment, we attack the real Gmail web client by simulating such a server-side adversary using a man-in-the-middle proxy; we simulate the adversary because we don't have actual control over Gmail servers.

We first show that the attacker can indeed meet the two key requirements of file injection and automatic query replay.
Subsequently, we perform an exhaustive experiment across a wide range of dictionary sizes (10 to 100K) to determine the {\em minimum amount of time} required to mount a successful attack on Gmail. The parameters of our attack are governed by the following constraints: (i)~the periodicity of replays in the Gmail client; (ii)~the time it takes to inject files into a user's inbox; and (iii)~the pagination limit in Gmail (which upper bounds the total number of injections).
We find that for a small dictionary of size 10, a successful attack can be mounted within 1 minute from start to end; for a large dictionary with 100K words, an attack completes successfully in around 7 minutes (see \cref{fig:gmail:diffdicsizes}).

We now describe our methodology in more detail.

\paragraph{Setup}
Since we don't have control over Gmail servers, we simulate a server-side adversary using a man-in-the-middle (MITM) HTTPS proxy~\cite{mitmproxy}. Specifically, we launch the Gmail web client on a browser within a guest virtual machine, and launch the MITM proxy on the host. We reroute all host network traffic through the MITM proxy. Subsequently, we install the proxy's certificate at the client browser in order to simulate a server-side adversary.
At this point, all TLS network traffic to and from the client browser passes through the MITM proxy, which it can then examine and manipulate.

\paragraph{Query replay}
Once a user issues a query, we use the MITM proxy to stimulate automatic query replay by simply dropping the HTTP responses returned by the server. After a period of time, the Gmail web client retries the query automatically, without user intervention. Specifically, we find that the client replays the query every 1--3 minutes in the absence of a response.
To the user, it simply appears as if the client has a bad network connection.

\paragraph{File injection}
File injection in Gmail is simple; the attacker requires a separate Gmail account to send emails to the victim.
For the base attack, the attacker must send $k$ emails in each round and also be sure that they are all indexed by the next replay (\ie at least 60s).
We determined the rate at which emails could be injected (\cref{fig:gmail:injection}) to show that it is feasible to index a sufficient number of emails.
We found that after injecting 40 emails of size 10KB each, 36 were visible in the user's mailbox 60 seconds later, shown in \cref{fig:gmail:injection}.
Thus, within a time window of 60s, the attacker can pick any value less than 36 as a safe option for $k$.

\paragraph{Volume leakage}%
In this experiment we assume that the proxy directly obtains the exact result set size from the server, since we simulate a server-side adversary. However, we find that Gmail has a maximum pagination limit of 100, \ie the server returns at most 100 results in response to a query. The pagination limit constrains the parameter regime of our attack, in that it upper bounds the total number of files that can be injected by the attacker over the duration of the attack.

\paragraph{End-to-end attack}
The aim of the experiment is to minimize the time it takes to launch a successful attack.
However, the constraints discussed above---the periodicity of replays, the time it takes to inject files, and the pagination limit---restrict the parameter regime within which an attacker can operate. Therefore, we start by computing the optimal parameters required for mounting a successful attack within the space of possible parameters. Next, we attack Gmail using the computed parameters and report the end-to-end duration of the attack.

 Since the Gmail application has a fixed periodicity of replays, the attack duration is directly governed by the number of replays required for the attack. Hence, given a dictionary size $|D|$, our aim is to minimize $\log_k |D|$, where $k$ refers to the number of files that need to be injected per round. However, given the pagination limit of $\ell = 100$, we require that the total number of injected files $k \times \log_k|D|$ be less than $\ell$. At the same time, $k$ should be less than 36 given the time it takes to inject files.

We therefore solve the following optimization problem:
\begin{align*}
\text{minimize}\quad & \log_k|D| \\
\text{subject to}\quad & k \times \log_k|D| < 100\\
\text{and}\quad & k < 36
\end{align*}
\cref{fig:gmail:diffdicsizes} summarizes our findings for varying sizes of the attacker's dictionary, 10 to 100K. Note that the total number of injected emails is sometimes marginally less than $k\times \log_k|D|$. This is because $\log_k|D|$ is not always an integer, and therefore files of interest across subsequent rounds may sometimes contain less than $k$ keywords. Additionally, for $|D| = 100K$, our attack requires an extra round of replay because the size of the injected files in the first round were large, increasing the time it took for files to get sent and indexed.

Overall, our experiment demonstrates the feasibility of volume-based attacks on Gmail, which can be successfully completed within a matter of minutes depending on the size of the attacker's dictionary.
In addition, the attack is difficult to detect because during the course of the attack, the user only sees a suspended connection. The user only makes a single query, and the Gmail client automatically replays the query in the background. During this time, emails injected by the attacker are not delivered to the user's web client, and only modify the server-side index. The user may later see the injected emails, but only after the attack successfully completes.

\iffull
\subsection{Case Study: Path ORAM}
\label{s:eval:oram}

\begin{figure}[t]
\includegraphics[width=\linewidth]{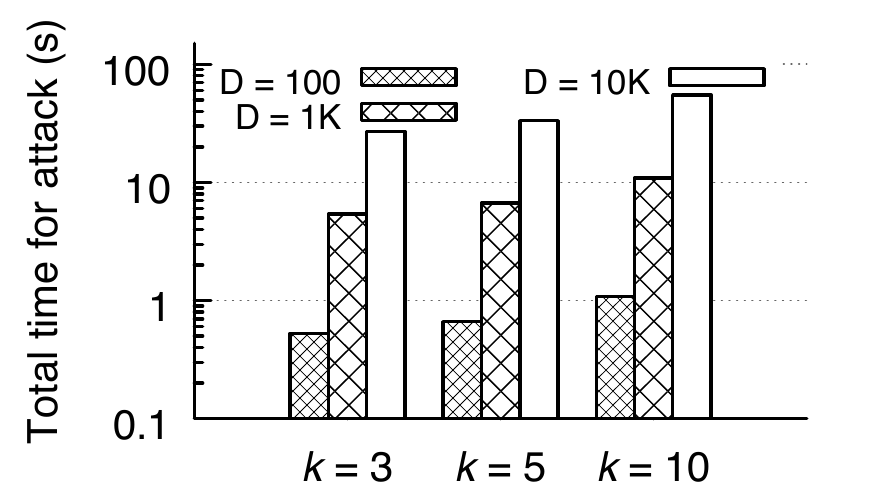}
\caption{Total time required to mount the base attack on a secondary index
stored in an ORAM instance.}
\label{fig:oram}
\end{figure}

We demonstrate the feasibility of our attack by additionally evaluating the
time it takes for an attacker to mount the base attack (\cref{s:attack:basic}), when the
database is stored in a Path ORAM
instance~\cite{StefanovDSFRYD13}. Our setup comprised a single machine
with four 2.0GHz Intel E7540 Nehalem 6-core processors and 256GB of RAM.

Specifically, we construct an index over the documents of a single user from
the Enron corpus (comprising 3041 emails). We store the index in a Path ORAM
instance by mapping each keywords to an ORAM block that stores the list
of document IDs corresponding to the keyword.  In our experiments, we configure
the Path ORAM tree to support 1M items with a block size of 8KB.

We then simulate the base attack on the ORAM instance by querying
for randomly selected keywords. The response to each query is the list of
document IDs that contain the keyword. We subsequently measure the total time
taken by the attacker in order to learn the queried keyword. \cref{fig:oram}
illustrates the results for varying choices of $k$, and for
different dictionary sizes.

For a small dictionary size of 100, the attack completes in less than a
second. As the dictionary size grows to 10K keywords, the duration of the
attack increases to \txtapprox1 minute. Note that we neglect the impact of
network latency in this experiment, which will be proportional to the number
of replays required for the attack to succeed (since each replay results in an
extra roundtrip over the network). Even so, our point in this experiment is
not to establish hard bounds on the running time of an attack; instead, we
simply aim to illustrate that our attack can be realistically mounted in a
matter of seconds to minutes, even in the ORAM model.
\fi

\section{Mitigations}
\label{sec:mitigate}

In this section, we discuss mitigations of our attack.
Overall, we believe it is difficult to eradicate our attack in all settings.
Injection is often fundamental to application functionality, worst-case padding is
too expensive, and replay could be a legitimate user or application action, as
discussed in~\cref{s:abilities}. Nevertheless, based on our evaluation in
\cref{s:eval}, we believe that the mitigations proposed below could significantly
reduce the extent of the attack by limiting the attacker's abilities (\cref{s:abilities}) or making the attack too expensive to mount.
However, with enough resources, it is possible for the
attacker to defeat some of these through the attack
extensions described in \cref{s:attack}.

\paragraph{Preventing volume leakage} 
Strategies that reduce the attacker's ability to measure the
number of files contained in a response can be effective in preventing volume-based attacks. 
As discussed in \cref{s:attack:padding,s:eval:padding}, padding query results to an upper bound 
might help mitigate the attack by increasing the attacker's burden, and can be effective in hiding the response size.
At the same time, it results in unaffordable overheads for many applications~\cite{Naveed15}, as also demonstrated by our analysis in \cref{s:eval:padding}.
We believe that to varying degrees, this is a property of all
padding schemes.

A more practical way to hinder the attacker is to inject some noise in the
responses. This requires little overhead in server-client bandwidth
compared to the attacker's overhead: an additive factor of $k$ per
query compared to a multiplicative factor of $k$ per attack. This
countermeasure is also simple to implement: add a random number of
dummy files to every response and have the client filter them out.
Note that while this increases the attacker's overhead, it does not wholly preclude the
attack, as we described in \cref{s:attack:noise}.

Another method is to limit the number of results that can be fetched at a time.
The user must explicitly request further results if needed.
While stricter limits on the number of results lowers the feasible dictionary size for the attack (thereby increasing the attacker's burden), it might also have an adverse impact on user experience.

\paragraph{Preventing file injection} File injection is arguably the most difficult to
defend against, since it is often a part of the target application's
functionality. For example, an email inbox search feature is not much use if it
can only search for keywords within emails that were sent by the user, and not to the user.

Thus, we believe that the main defense here is rate-limiting and detection. In
the email application (\cref{s:instances:email}), this would require the server to actively
filter out suspicious emails. 
As we found in
\cref{s:eval:gmail}, applications such as Gmail already rate-limit emails;
however, this was not enough to defeat the
attack. %

\paragraph{Preventing query replay} The most effective way to prevent the base attack is
to block query replays. Query replays are a feature of applications such as
Gmail that produce the illusion of a seamless connection during limited network
connectivity (\cref{s:eval:gmail}). A possible countermeasure is to include a
unique query ID for each request, so that the server can detect and filter out
duplicate requests.

The main disadvantage of such an approach is that the server would then have to
record and replay past responses in order to both prevent the attack and keep
the application available. Long-running user sessions would have to be
garbage-collected, potentially sacrificing correctness. More crucially,
web servers are often replicated for performance and fault tolerance. Ensuring
consistency for duplicate queries in such settings is well-known to be
expensive, if even possible~\cite{gilbert2002brewer}. Finally, this
countermeasure does not prevent against the single-round attack described in
\cref{s:attack:noreplay}.

\section{Related Work} %
\label{sec:relwork}

To  compute on encrypted data, the community has developed a rich set
of cryptographic schemes and protocols, as well as
encrypted database systems.  
A recent set of attack papers study
the information an attacker can obtain from these schemes and systems, termed {\em leakage-abuse} attacks by Cash \etal~\cite{CashJJKRS13}.  
Many attacks in this category leverage leakage from data relations or access
patterns, and few works target oblivious schemes and systems relying only on volume leakage, as our work does.

We now briefly discuss cryptographic schemes and systems that leak result volumes, followed by related attacks on these systems.

\paragraph{Cryptographic schemes and systems}\label{s:r-schemes}\label{s:r-systems}
There are a multitude of ways to access or compute on encrypted data,
such as property-preserving/ property-revealing encryption~\cite{boldyreva:ope, boldyreva:ope-revisited,popa:mope,KS14:optimalAvgOPE,WuLewiRange} or searchable encryption~\cite{SongWP00, CurtmolaGKO06, KamaraPR12, CashJJKRS13, OgataKKM13, CashJJJKRS14, LauCSJLB14, Kurosawa14, NaveedPG14, StefanovPS14, HeAJSS14, Bost16}. 
For a comprehensive survey, see~\cite{Smart:survey}.
Here we focus on systems leveraging ORAM or PIR that leak the volume of results (and are thus vulnerable to our attacks).

ORAM techniques~\cite{GoldreichO96,StefanovDSFRYD13} and PIR schemes~\cite{PIR:survey} enable a client to access data items stored at the server without the server knowing the query requested. These two types of schemes consider different models and employ different techniques, but ultimately, the goal of both is to hide the contents of the query from the server. 

Many works leverage ORAM for different purposes. For example, ObliviStore~\cite{Stefanov:2013} and CURIOUS~\cite{Bindschaedler:2015} show how to use ORAM for cloud storage. TaoStore~\cite{SahinZALT16} shows how to support asynchronicity in multi-user cases. 
These systems leak the volume of results to the server.
Oblix~\cite{mishra:oblix} builds a search index over ORAM and pads or truncates the set of results to a fixed size.
Roche \etal~\cite{Roche:vORAM} propose an ORAM scheme (called vORAM) that supports variable-sized data blocks by including them within an ORAM node (or bucket) on the same path, but  our attack with noisy data in \cref{s:attack:noise} can still work on these schemes. While such a scheme hides the result volume to some extent, it limits the amount of data that can be included on a path in this way (say, $L$ files). Since the attacker can see how many ORAM paths are fetched on a query, he can estimate the number of results with an error margin of $L$. In the database setting, this error margin can be made relatively small because the database fetches the rows that match the keyword (not just the row identifiers), and these cannot all be stored on the same path.
Moreover, Naveed~\cite{Naveed15} demonstrates that, in general, extending ORAM schemes to hide the volume of results is (for a large fraction of queries) actually slower than streaming the database through the client.

Some works~\cite{SQLQuery:PIR:Goldberg, PIRPublicData, PIRBook} build SQL databases or keyword indices on top of PIR. For example, to perform an index search for a keyword $k$, the client performs PIR retrievals to traverse the index and select every value in the index. The server does not know which data items were fetched, but it still sees the number of results.

\paragraph{Related attacks}\label{s:r-attacks}
When considering the amount of leakage attacks exploit, there are at least three categories: 
attacks exploiting data relations, 
attacks exploiting  access patterns, and  
 attacks exploiting result set size but not access patterns or data relations. The last category is the most challenging because the attacker needs to work with the least amount of information. At the same time, this category is also the least studied. Our attack is in this last category, and we now discuss other volume-based attacks.

\iffull
\paragraph{Attacks exploiting data relations}
These attacks~\cite{DDC16:whatElseORE, GSBNR16:leakageAbuseORE, NKW15:inferenceAttacks} target property-preserving or property-revealing encryption. They exploit information revealed by the encryption schemes such as the order or frequency of data items. 
In comparison, our attack applies to stronger schemes (\ie semantically secure
and oblivious) that do not have such data relations leakage.

\paragraph{Attacks exploiting access patterns}
A set of works showed how to exploit the leakage from query access patterns in searchable encryption~\cite{IslamKK12, CashGPR15, ZhangKP16, LiuZWT14, AbdelraheemAG17, GrubbsMNRS16, PouliotW16, GiaruadABL17,STRESS,KennyAccessPattern}. 
Other works exploit access patterns for range queries or other relational
queries~\cite{Islam:2014:rangepatterns,KellarisKNO16,Dautrich:2013}. These
attacks rely crucially on the attacker knowing what is the set of documents returned for a query, and in some cases more (\eg if the same query is repeated). 
In contrast, our work applies to stronger schemes that do not reveal the access patterns or leak the document set.

\paragraph{Attacks exploiting result set size}

There is very little work done on understanding the leakage from result set size in oblivious settings, which is the space of our work.
\fi

Cash \etal~\cite{CashJJJKRS14} point out that if an attacker knows the exact number of times a keyword appears in a victim's documents, and if that result size is unique to this keyword, the attacker can identify the keyword when seeing the result size.
 In comparison, our attack does not assume the attacker knows the frequency of each keyword in a victim's index---indeed, when attacking a specific user in the email application, the attacker often does not have access to the victim's mailbox and does not know these counts. Moreover, many keywords don't have unique counts (\eg $99\%$ words in the Enron dataset, \cref{sec:eval}), in which case the attack of Cash \etal does not work.  %
Our attacks do not suffer from this limitation.

In seminal work, Kellaris \etal~\cite{KellarisKNO16} showed how an attacker can reconstruct the contents of a field in the database given only the volume of results.
However, their attack relies on a set of assumptions that are arguably not realistic in practice.
In particular, Kellaris \etal assume that 
(1)~the user makes range queries that are {\em uniformly} distributed on that column, a property on which their algorithm relies crucially; and (2)~the user makes $O(N^4 \log N)$ queries where $N$ is the size of the domain. Such a large number of queries is infeasible for the attacker to observe in many settings. 
Grubbs \etal~\cite{Grubbs18:Volume} improved upon the results of Kellaris ~\etal by demonstrating attacks that do not make assumptions on the distribution of queries, as long as \emph{all} possible $O(N^2)$ range queries are issued. As a result, for queries drawn from a uniform distribution, their attack requires $O(N^2\log N)$ queries to be issued. 
In recent work, Gui \etal~\cite{Gui19:Volume} further improved upon the result of Grubbs \etal by demonstrating attacks that require an order of magnitude fewer queries. However, their attack still assumes that the adversary is able to observe all possible queries that produce a bounded number of results, and that the database is dense (\ie all possible values occur in the database).

Our attacks differ from the attacks described above in assumptions as well as target.
In contrast to existing attacks, our attack requires only a single query to be issued by the user, followed by $O(\log |D|)$ replays (which, concretely, is often less than $10$ in number; \cref{s:eval}).
Our attack also makes no assumptions about the query distribution.
On the other hand, unlike the aforementioned works, our attack requires the ability to inject and sometimes replay queries, though we demonstrate realistic scenarios in which this can be achieved (\cref{sec:setup:application}).
Another difference is that the aforementioned attacks reconstruct the database using range queries, but not individual query keywords; we  reconstruct queries, but do not target the overall database. However, we note that reconstruction follows as a direct consequence of our attack, where the original counts for each keyword could be determined if queries for all possible keywords are issued.

In concurrent work, Blackstone \etal~\cite{Blackstone19} also propose a suite of volume-based attacks, some of which passively analyze the volume of query results based on some known data (similar to prior work), while two additional attacks rely on injecting files into the database, similar to ours.
In particular, their file injection attacks are conceptually similar to our base attack in \cref{s:attack:basic}.
The primary difference is that our attacks leverage query replay---we study the real behavior of many applications, and crucially, we find automatic query replay to be a common property; by leveraging this property, we are able to substantially improve the efficiency of our attacks.
The attacks of Blackstone \etal do not require queries to be replayed.
As a consequence, however, their attacks rely on an alternate set of assumptions.
First, they require the adversary to know the baseline volumes for all keywords in the dictionary, before the attack can be launched.
Second, for correctness, their binary search based attack requires the targeted keyword to have a unique volume in the baseline volumes.
Our attacks do not have such requirements, and our algorithm allows us to prune the search space faster, drastically decreasing the overall duration of the attack.
We also describe multiple extensions to the attack, including settings where the results are padded (\cref{s:attack:padding}) or noisy (\cref{s:attack:noise}).

\section{Conclusion}
\label{sec:conclusion}

We demonstrated a generic attack on encrypted databases that only leverages result size leakage. We showed that our attack can reconstruct queries
in a range of realistic settings, weakening the security  of these systems. Our attack is resistant to various mitigation strategies, and can reconstruct sensitive information even in situations where the result volumes are padded, the volume measurement is noisy, or the client application lacks the ability to replay queries. We showed the effectiveness of our attack via both theoretical bounds and an empirical evaluation, including a demonstration on the Gmail web application.

\ifcamera
{
\small
\section*{Acknowledgments}
We thank the anonymous reviewers for their helpful feedback. 
This work was supported by the NSF CISE Expeditions Award CCF-1730628, as well as gifts from the Sloan Foundation, Bakar Program,  Alibaba, Amazon Web Services, Ant Financial, Capital One, Ericsson, Facebook, Futurewei, Google, Intel, Microsoft, Nvidia, Scotiabank, Splunk, and VMware.
}
\fi

{
\bibliographystyle{IEEEtran}
\bibliography{IEEEabrv,bib/references,bib/str,bib/conf}
}

\appendices %

\section{Extensions for conjunctive queries}
\label{s:conjunctions}

In this section, we describe extensions to our base attack for identifying keywords in conjunctive queries. 
The extensions are based on attacks described by Zhang \etal~\cite{ZhangKP16}. 
Zhang \etal consider powerful attackers who can observe file access patterns and thereby uniquely identify documents in the result set. 
Instead, we adapt the attacks for our setting in which the attacker observes nothing more than the size of the result set. 
We present two attacks---the first optimizes the number of required replays while the second reduces the number of files injected.

\subsection{Reducing the number of required replays}
Zhang \etal~\cite{ZhangKP16} present a general attack for conjunctive queries with $d$ keywords. 
The attacker injects $n$ files into the database, each containing $L$ randomly chosen keywords from the dictionary $D$. 
They claim that for properly chosen $n$ and $L$, the intersection of the returned files will contain exactly the $d$ queried keywords and no others with a very large probability. 
The authors prove the claim for $L = \frac{1}{2}^{1/d}|D|$, and $n = (2+\epsilon)d \log |D|$ (where $\epsilon > 0$) and show that the probability of success in this case is $e^{(-1/|D|)^{\epsilon/4}}$.  

We extend the above attack as follows. 
The attacker creates the $n$ files as before, but injects $2^i$ copies of the $i$-th file into the database (for $i \in [0, n-1]$). 
The number of files returned is thus sufficient to uniquely identify the exact subset of files whose copies were returned in the result.

The attack only requires a single replay of the query, but the total number of files injected by the attacker in this case is equal to $2^0 + 2^1 + \ldots + 2^{n-1} \approx 2^n = |D|^{(2+\epsilon)d}$. The attack is thus more suited for situations with small dictionaries.

\subsection{Reducing the number of files injected}
Zhang \etal also present an adaptive attack for conjunctive queries with $d$ keywords, which reduces the number of files the attacker needs to inject. The core idea is to first perform a binary search in order to identify the lexicographically largest keyword $w$ in the query. Once $w$ is identified, the attacker performs another binary search to identify the next keyword in the query, but with $w$ present in all the injected files. The attacker proceeds in this manner until all the keywords are identified.

Specifically, the attacker orders all keywords in the dictionary lexicographically, and then injects a single file containing the first $|D|/2$ keywords. 
If the size of the result set increases by one (\ie the response includes the injected file), then he repeats the attack by injecting another file containing the first $|D|/4$ keywords; on the other hand, if the response does not include the file, then he injects a file containing the first $3|D|/4$ keywords instead. 
The attacker repeats the process $\log|D|$ times, until the first (lexicographically largest) keyword is identified. 
The attack applies straightforwardly in our setting.

For this variant of the attack, both the number of required replays and the total number of files injected are equal to $d \log |D|$. Compared to the attack in previous section, this variant drastically reduces the number of files that need to be injected, but also increases the number of required replays.

\end{document}